\definecolor{webgreen}{rgb}{0,.5,0}
\definecolor{webbrown}{rgb}{.6,0,0}
\title{Counting Subwords and Regular Languages}
\author{Charles J. Colbourn and Ryan E. Dougherty\\
Computing, Informatics, and Decision Systems Engineering\\
Arizona State University\\
P.O. Box 878809\\
Tempe, AZ 85287-8809\\
USA\\
{\tt ryan.dougherty@asu.edu}\\
{\tt Charles.Colbourn@asu.edu}\\[.1in]
\and
Thomas Finn Lidbetter and Jeffrey Shallit\\
School of Computer Science \\
University of Waterloo\\
Waterloo, ON  N2L3G1 \\
Canada\\
{\tt finn.lidbetter@uwaterloo.ca}\\
{\tt shallit@uwaterloo.ca}}
\begin{document}

\maketitle
\theoremstyle{plain}
\newtheorem{theorem}{Theorem}
\newtheorem{corollary}{Corollary}
\newtheorem{lemma}{Lemma}
\newtheorem{proposition}{Proposition}

\theoremstyle{definition}
\newtheorem{definition}{Definition}
\newtheorem{example}{Example}
\newtheorem{conjecture}{Conjecture}

\theoremstyle{remark}
\newtheorem{remark}{Remark}

\begin{abstract}
Let $x$ and $y$ be words.  We consider the languages
whose words $z$ are those for which the numbers of occurrences
of $x$ and $y$, as subwords of $z$,
are the same (resp., the number of
$x$'s is less than the number of $y$'s, resp., is 
less than or equal).   We give a necessary and sufficient
condition on $x$ and $y$ for these languages to be regular, and we show how to check this condition efficiently.
\end{abstract}

\section{Introduction}

A major theme in formal language theory is counting occurrences of letters
in words.  Let $\Sigma$ be a finite alphabet.
For a word $z \in \Sigma^*$ and a letter $a \in \Sigma$ we write
$|z|_a$ for the number of occurrences of $a$ in $z$.   A classic example of a language that is context-free but
not regular is
$ \{ z \in \{a,b\}^* \ : \ |z|_a = |z|_b  \}$; see, for example,
\cite[Exercise 3.1 (e), p.~71]{Hopcroft&Ullman:1979}.  The Parikh map (see, e.g., 
\cite{Parikh:1966}) is another example of this theme.

We can generalize the counting of {\it letter\/} occurrences to the
counting of {\it word\/} occurrences.
Let $w, y \in \Sigma^*$.
We say $y$ is a {\it subword\/}\footnote{Sometimes called ``factor'', especially in the European literature.} of $w$ if there exist
$x, z \in \Sigma^*$ such that $w = xyz$.  
Define $|w|_y$ to be the number of (possibly overlapping) occurrences
of $y$ in $w$.  Thus, for example, $|{\tt banana}|_{\tt ana} = 2$.

In this paper we study the languages
\begin{align*}
L_{x<y} &= \{ z \in \Sigma^* \ : \ |z|_x < |z|_y \},  \\
L_{x\leq y} &= \{ z \in \Sigma^* \ : \ |z|_x \leq |z|_y \},  \\
L_{x=y} &= \{ z \in \Sigma^* \ : \ |z|_x = |z|_y \}  
\end{align*}
and their complements.
The following is easy to see:

\begin{proposition}
For all words $x$ and $y$, the languages
$L_{x<y}$, $L_{x \leq y}$, $L_{x=y}$, and their complements $L_{x\geq y}$, 
$L_{x>y}$, $L_{x\not=y}$
are all deterministic context-free languages.
\end{proposition}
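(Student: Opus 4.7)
The plan is to construct, for each of $L_{x<y}$, $L_{x \leq y}$, and $L_{x=y}$, a deterministic pushdown automaton (DPDA) recognizing it; the three complements $L_{x \geq y}$, $L_{x > y}$, and $L_{x \neq y}$ then follow from the classical closure of DCFLs under complementation. All three DPDAs share the same transitions and stack behaviour; only their acceptance conditions differ.

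In the finite control I would run in parallel the two Knuth--Morris--Pratt automata for $x$ and for $y$. After reading each input letter this product state deterministically reports whether a new occurrence of $x$ has just ended at the current position, whether a new occurrence of $y$ has, both, or neither; since at most one occurrence of each pattern can end at any single position, the change in $|z'|_x - |z'|_y$ per input letter lies in $\{-1, 0, +1\}$. The stack stores $\bigl| |z'|_x - |z'|_y \bigr|$ in unary above a bottom-of-stack marker, while a sign bit is kept in the finite control. A reported occurrence of $x$ triggers a unary $+1$ (push a tally if the sign is $+$ or the counter is $0$; pop a tally if the sign is $-$, flipping the sign whenever the pop exposes the bottom marker); a reported occurrence of $y$ triggers the symmetric $-1$; and simultaneous occurrences cancel and leave the stack unchanged. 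Acceptance by final state is then configured per language: accept iff the counter is $0$ for $L_{x=y}$, iff the sign is $-$ and the counter is positive for $L_{x<y}$, and iff the sign is $-$ or the counter is $0$ for $L_{x \leq y}$.

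The main obstacle will be keeping the machine deterministic in the strict DPDA sense while folding the stack update into the input-reading move. Because each input letter induces at most one push or one pop after the cancellation case, this bookkeeping can be baked directly into $\delta$, with no $\varepsilon$-moves that would compete with input transitions on the same (state, top-of-stack) pair. Once these three DPDAs are in hand, complementing them finishes the remaining three cases.
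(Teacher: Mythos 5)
Your construction is essentially the paper's: a product of the pattern-matching automata for $x$ and $y$ in the finite control, with the stack holding $\bigl||z'|_x-|z'|_y\bigr|$ in unary and a sign flag in the state, and acceptance conditions chosen per language. The only cosmetic differences are that you invoke closure of DCFLs under complementation for the three complements (one could equally note $L_{x\geq y}=L_{y\leq x}$ and so on), whereas the paper simply declares all cases analogous to $L_{x=y}$; both treatments are correct.
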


\begin{proof}
We prove this only for $L_{x=y}$, with the other cases
being analogous.
We construct a deterministic pushdown automaton $M$ that recognizes $L_{x=y}$ as follows:
its states record the last $\max(|x|,|y|)-1$ letters of the input seen so far. 
The stack of $M$ is used as a counter to maintain the absolute value of the difference 
between the number of $x$'s seen so far and the number of $y$'s
(a flag in the state records the sign of the difference).
We have $M$ accept its input if and only if this difference is $0$.
Since there is only one possible action for every triple of state, input symbol, and top-of-stack symbol, $M$ is deterministic (any ``invalid'' configurations transition to a dead state $d$).   
\end{proof}

While $L_{x=y}$ is always deterministic context-free, sometimes --- perhaps
surprisingly --- it can also be regular.  For example,
when the underlying alphabet $\Sigma$ is unary, then $L_{x=y}$ is always regular.
Less trivially, for $\Sigma = \{0,1\}$ it is an easy exercise to show that $L_{01=10}$ is regular, and is
recognized by the $5$-state DFA in Figure~\ref{a01}; however, $L_{01=10}$ is not regular when $\Sigma = \{0,1,2\}$.
On the other hand, $L_{0011=1100}$ is never regular, even when $\Sigma = \{0,1\}$.  

\begin{figure}
\centering
\includegraphics[width=3in]{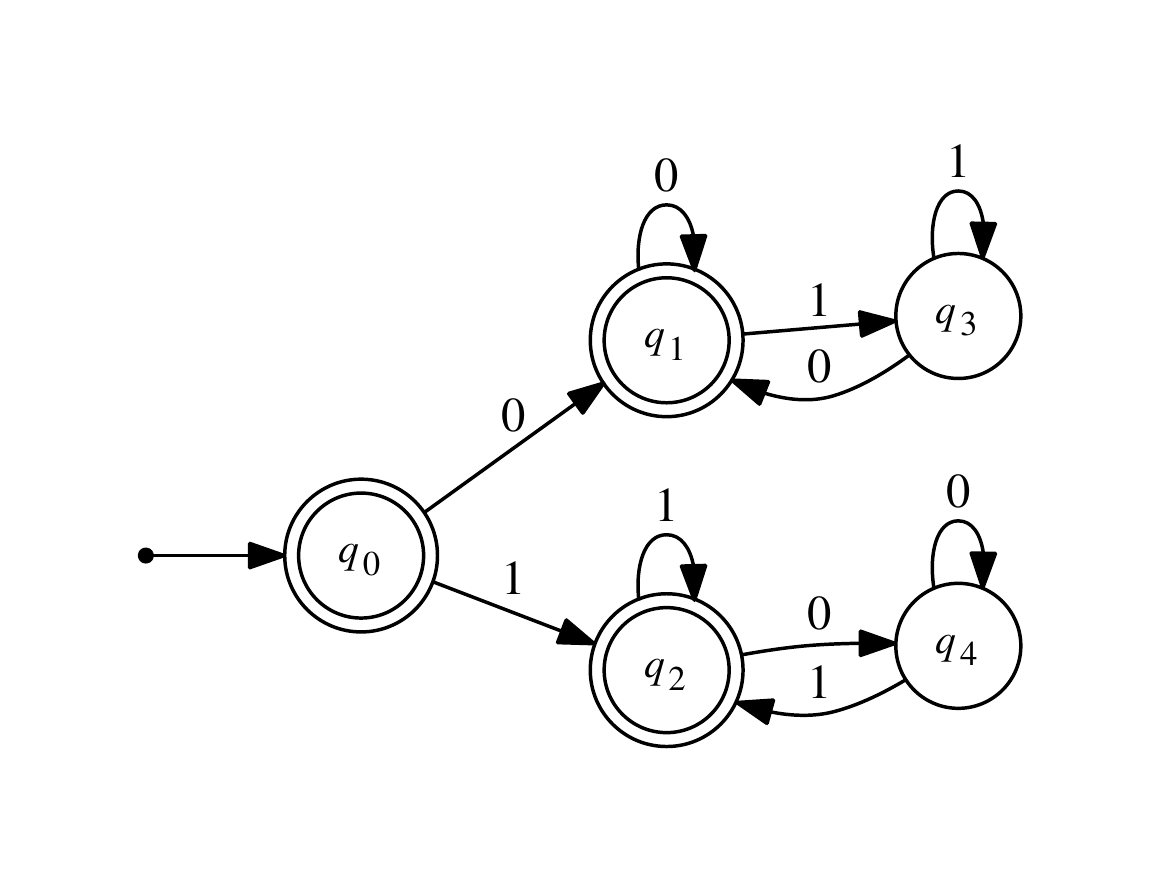}

\vskip -.35in
\caption{A DFA recognizing $L_{01=10}$ over 
$\Sigma = \{0,1\}$}
\label{a01}
\end{figure}

The goal of this paper is to give a necessary and sufficient condition for these languages to be regular, and show how to check it efficiently.  The case
$L_{x<y}$ is covered in Section~\ref{lt1}, and the case
$L_{x=y}$ is covered in Section~\ref{eq1}.  The remaining case 
$L_{x \leq y}$ is virtually the same as $L_{x<y}$, and is left to the reader.

We assume a basic background in formal languages and automata; for
all unexplained notions, see \cite{Hopcroft&Ullman:1979}.
Four things are worth noting:  if $L$ is a language,
then we write
$L^c$ for the complement $\Sigma^* - L$.  Also, if $x$ is a word,
then $x^R$ denotes the reverse of the word $x$.  If $x = a_1 a_2 \cdots a_n$ is a word, with each $a_i \in \Sigma$, then $x[i..j] := a_i \cdots a_j$.  Finally,
if $x = tu$, then by $x u^{-1}$ we mean the word $t$, and by
$t^{-1} x$ we mean the word $u$.

\section{Bordered words and periodicity}

Let $y, z$ be words with $y$ nonempty.  We say that $z$ is {\it $y$-bordered} if
$z \not= y$ and $y$ is both a prefix and a suffix of $z$.
There are two types of $y$-bordered words:  one where
the prefix and suffix $y$ do not overlap in $z$ (that is,
where $|y| \leq |z|/2$), and one where they do 
(that is, where $|y| > |z|/2$).   In the first case, we say that $z$ is 
{\it disjoint\/} $y$-bordered, and in the second case,
{\it overlapping\/} $y$-bordered.
For example, {\tt entanglement} is disjoint {\tt ent}-bordered, and
{\tt alfalfa} is overlapping {\tt alfa}-bordered.
For more about borders of words, see, for example, \cite{Ehrenfeucht&Silberger:1979}. 

We will need two lemmas about bordered words.  

\begin{lemma}
Suppose $z \in \Sigma^*$ is
$y$-bordered.  Then there exist words $u \in \Sigma^+$ and
$v \in \Sigma^*$ and an integer $e \geq 0$
such that $y = (uv)^e u$ and $z = (uv)^{e+1}u$.
\label{ls}
\end{lemma}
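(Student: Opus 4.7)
The plan is to handle separately the two cases of $y$-borderedness already introduced in the excerpt. For the \emph{disjoint} case, where $|z| \geq 2|y|$, I write $z = y w y$ for some $w \in \Sigma^*$ and immediately take $e = 0$, $u = y$ (in $\Sigma^+$ since $y$ is nonempty by hypothesis), and $v = w$; this trivially gives $y = u = (uv)^0 u$ and $z = uvu = (uv)^1 u$.

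For the \emph{overlapping} case, where $|y| > |z|/2$, the key observation is that $z$ has period $p := |z| - |y|$. Indeed, since $y$ is both a prefix and a suffix of $z$, for each $0 \leq i \leq |y|-1$ the symbol $z[i]$ (the $i$-th letter of the prefix copy of $y$) must equal $z[i+p]$ (the $i$-th letter of the suffix copy of $y$), and $p \geq 1$ because $z \neq y$. Next, let $w$ denote the length-$p$ prefix of $z$; by the periodicity just established, $z$ is the concatenation of some number of copies of $w$ followed by a prefix of $w$. I will split $w = uv$ so that both $|y|$ and $|z|$ end on the $u$-boundary of a $w$-block.

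Concretely, writing $|y| = ep + r$ with $e \geq 0$ and $1 \leq r \leq p$, I set $|u| = r$ and $|v| = p - r$, so that $u$ is the length-$r$ prefix of $w$ and $v$ is the remaining suffix. Periodicity of $z$ then gives $y = w^e u = (uv)^e u$, and since $|z| = |y| + p = (e+1)p + r$ it also gives $z = w^{e+1} u = (uv)^{e+1} u$, as required. The only subtle point---really the main obstacle---is insisting on the range $1 \leq r \leq p$ rather than the usual $0 \leq r < p$: ordinary division would produce $r = 0$ (hence $u = \varepsilon$) whenever $p \mid |y|$, which would violate $u \in \Sigma^+$. The shifted range sidesteps this by taking $r = p$ and $v = \varepsilon$ in that boundary case, at the cost of decrementing $e$ by one; everything else is a direct length count combined with the period-$p$ property of $z$.
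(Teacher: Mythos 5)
Your proof is correct, but it takes a different route from the paper's: the paper disposes of this lemma in one line by citing the Lyndon--Sch\"utzenberger theorem (writing $z = ys = ty$ with $s,t$ nonempty of equal length, so that $ty = ys$ forces $t = uv$, $s = vu$, $y = (uv)^e u$, whence $z = ty = (uv)^{e+1}u$), whereas you reprove the needed special case from scratch via the periodicity of bordered words. Your argument is sound throughout: the disjoint case is handled by the trivial choice $u = y$, $v = w$, $e = 0$; in the overlapping case the period $p = |z|-|y| \geq 1$ is correctly derived from the two aligned copies of $y$, the factorization $z = w^{k}w'$ follows from that period, and your use of the shifted remainder range $1 \leq r \leq p$ is exactly the right device to guarantee $u \in \Sigma^+$ when $p$ divides $|y|$ (absorbing that boundary case as $u = w$, $v = \varepsilon$). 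In fact the overlapping-case argument subsumes the disjoint case as well, so the case split is harmless redundancy. What the paper's approach buys is brevity and reuse of a standard cited result; what yours buys is a self-contained, elementary verification that makes the roles of $u$, $v$, and $e$ (as period, complement, and quotient) completely explicit.
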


\begin{proof}
Follows immediately from the Lyndon-Sch\"utzenberger theorem
(see, for example, \cite{Lyndon&Schutzenberger:1962} or
\cite[Theorem 2.3.2]{Shallit:2009}).   
\end{proof}

\begin{lemma}
Let $u \in \Sigma^+$,
$v \in \Sigma^*$, and $e \geq 0$.
Suppose that $y = (uv)^e u$.  Define $z_1 = (uv)^{e+1}$ and $z_2 = (uv)^{e+2}$.
Let $c = |z_1|_y$ and $d= |z_2|_y - |z_1|_y $.  Then
$c, d \geq 1$ and 
$|(uv)^i|_y = (i-e)d + c-d$ for all integers $i > e$.
\label{uvu}
\end{lemma}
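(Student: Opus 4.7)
The plan is to exploit the periodicity of $W_i := (uv)^i$, which has period $|uv|$. Because $W_{i+1} = uv \cdot W_i$, an occurrence of $y$ at starting position $p \geq |uv|$ in $W_{i+1}$ corresponds, via the shift $p \mapsto p-|uv|$, bijectively to an occurrence of $y$ at position $p - |uv|$ in $W_i$. Letting $B_{i+1}$ denote the set of starting positions $p \in [0, |uv|)$ where $y$ occurs in $W_{i+1}$, this bijection yields the recurrence $|W_{i+1}|_y = |W_i|_y + |B_{i+1}|$.

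Next I would show that $B_{i+1}$ stabilizes once $i$ is large enough. Since $|y| = e|uv| + |u| \leq (e+1)|uv|$, any occurrence of $y$ with starting position $p < |uv|$ lies entirely within the first $e+2$ copies of $uv$; hence $B_{i+1}$ depends only on the prefix $W_{e+2}$ of $W_{i+1}$ whenever $i+1 \geq e+2$. Consequently, for every $i \geq e+1$ the set $B_{i+1}$ equals a common set $B$, and in particular $d = |z_2|_y - |z_1|_y = |B_{e+2}| = |B|$.

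With these two ingredients the recurrence becomes $|W_{i+1}|_y = |W_i|_y + d$ for all $i \geq e+1$, and a one-line induction from the base case $|W_{e+1}|_y = c$ gives $|W_i|_y = c + (i-e-1)d = (i-e)d + c - d$ for every $i > e$.

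It remains to verify the positivity $c,d \geq 1$. Both facts follow from the single observation that $y = (uv)^e u$ is a prefix of $(uv)^j$ for every $j \geq e+1$: this gives the occurrence at position $0$ in $z_1$ (so $c \geq 1$) and shows $0 \in B$ (so $d \geq 1$). I expect the only step that needs care is the stabilization of $B_{i+1}$; the length bound $|y| \leq (e+1)|uv|$ is the crux and I would state it explicitly before using it. Everything else is routine bookkeeping on top of the periodicity bijection.
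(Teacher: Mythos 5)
Your proof is correct and takes essentially the same route as the paper's: the paper appends $uv$ on the right, asserts that each such extension contributes exactly $d\geq 1$ new occurrences of $y$, and inducts, while you prepend $uv$ on the left and justify the constancy of the increment via the length bound $|y|\leq(e+1)|uv|$. Your write-up supplies the stabilization argument that the paper leaves implicit; there are no gaps.
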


\begin{proof}
If $x = (uv)^{e+1}$, then $|x|_y = c$.  Appending $uv$ to $x$ on the right
results in $d \geq 1$ additional copies of $y$.  The result now
follows by induction.
\end{proof}

We also recall the following classical result.

\begin{theorem}
Let $x,y$ be nonempty words.  There exists
a word with
two distinct factorizations as a concatenation
of $x$'s and $y$'s if and only if $xy = yx$.
\label{factor}
\end{theorem}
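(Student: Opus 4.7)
The plan is to prove both implications separately, with the forward direction being essentially free and the reverse direction handled by strong induction on $|x|+|y|$.

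For $(\Leftarrow)$: if $xy = yx$, then I would exhibit the word $xy$ itself together with its two labeled block-factorizations $(x,y)$ and $(y,x)$; these are distinct as sequences of block labels because $x$ and $y$ are distinct symbols in the factorization alphabet.

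For $(\Rightarrow)$: I would induct on $|x|+|y|$. Given a word $w$ with two distinct factorizations $f \neq g$ as block-sequences over $\{x,y\}$, first reduce to a canonical form by stripping off the longest common block-prefix of $f$ and $g$. Since $x,y$ are nonempty, neither $f$ nor $g$ can be a proper prefix of the other, so the truncated sequences still factorize some shorter word $w'$ and now differ at their first block. WLOG that first block is $x$ in one factorization and $y$ in the other, so $w' = x\alpha = y\beta$, and WLOG $|x| \leq |y|$. If $|x| = |y|$, comparing the first $|x|$ letters of $w'$ forces $x = y$ and then $xy = yx$ is immediate.

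Otherwise $|x| < |y|$, whence $x$ is a proper prefix of $y$ and $y = xz$ for some nonempty $z$. Here comes the key substitution trick: in both $f$ and $g$, replace every occurrence of the block $y$ by the two-block sequence $(x,z)$, producing new block-sequences $f', g'$ over $\{x,z\}$ that still factorize $w'$. The substitution is invertible: scanning right-to-left, each occurrence of $z$ in $f'$ must have come from a $y$-block in $f$, and it pairs uniquely with its immediately preceding $x$-block to reconstruct that $y$, while every remaining $x$ is standalone. Hence $f \neq g$ forces $f' \neq g'$. Since $|x|+|z| < |x|+|y|$ and both are nonempty, the inductive hypothesis applied to the pair $(x,z)$ yields $xz = zx$. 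Then $xy = x(xz) = x(zx) = (xz)x = yx$, completing the induction.

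The main obstacle is justifying the substitution step, specifically verifying that the replacement $y \mapsto (x,z)$ on labeled block-sequences is injective so that distinct factorizations over $\{x,y\}$ produce distinct factorizations over $\{x,z\}$. The right-to-left decoding described above is what makes this work; the rest is routine bookkeeping with prefixes and lengths.
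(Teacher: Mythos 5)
Your proof is correct, but it takes a genuinely different route from the paper: the paper does not prove this statement directly at all, instead citing the defect theorem (Lothaire) and a result of Gamard--Richomme--Shallit--Smith, whereas you give a self-contained elementary argument. Your argument is the classical Euclidean-style descent showing that $\{x,y\}$ is a code iff $xy \neq yx$: strip the common block-prefix of the two factorizations, observe the two first blocks must be distinct and hence one of $x,y$ is a proper prefix of the other, write $y = xz$, and push the problem down to the strictly smaller pair $(x,z)$. The one delicate point, as you correctly identify, is the injectivity of the relabeling $y \mapsto (x,z)$ on block sequences, and your right-to-left decoding handles it: in the label alphabet $\{x,z\}$ the set $\{(x),(x,z)\}$ is a suffix code (neither label sequence is a suffix of the other), so distinct sequences over $\{x,y\}$ map to distinct sequences over $\{x,z\}$, and the inductive hypothesis applies to give $xz = zx$, whence $xy = xxz = xzx = yx$. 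The edge cases are also handled sensibly: interpreting ``distinct factorizations'' as distinct label sequences makes the $\Leftarrow$ direction work even when $x = y$, and the nonemptiness of $x$ and $y$ is exactly what guarantees that neither factorization is a block-prefix of the other. What the citation buys the paper is brevity; what your argument buys is a complete, elementary, and fully checkable proof that does not rely on the (strictly stronger) defect theorem.
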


\begin{proof}
This follows from the so-called ``defect theorem"
\cite{Lothaire:1983}, or from \cite[Theorem 1]{Gamard&Richomme&Shallit&Smith:2017}.
\end{proof}

\section{Automata}
\label{autom}

We will need the following well-known result about pattern-matching automata (for example, see \cite[\S 32.3]{Cormen&Leiserson&Rivest&Stein:2001}).

\begin{theorem}
Given
a word $w \in \Sigma^n$, a DFA $M =
(\{ q_0, \ldots, q_n \}, \Sigma, \delta, q_0, \{ q_n \})$
exists of $n+1$ states such that $\delta(q_0, x) = q_n$
if and only if $w$ is a subword (resp., suffix) of $x$.   Here
the state $q_i$ can be interpreted as asserting that 
the longest suffix of the input that matches
a prefix of $w$ is of length $i$.  
\label{pma}
\end{theorem}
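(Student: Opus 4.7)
The plan is to build the standard Knuth--Morris--Pratt-style pattern-matching automaton and verify its correctness by an induction on input length. I take the state set to be $\{q_0,q_1,\ldots,q_n\}$ and intend to maintain the invariant that, after reading an input $x \in \Sigma^*$, the machine is in state $q_i$ exactly when $i$ is the length of the longest suffix of $x$ that is also a prefix of $w$. Since $w[1..i]$ for $0 \le i \le n$ enumerates all prefixes of $w$, at most one state is appropriate at any time, so the invariant determines $\delta$ uniquely.

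Next I would define the transition function explicitly. For a state $q_i$ (with $i<n$) and letter $a \in \Sigma$, set $\delta(q_i,a) = q_j$ where $j$ is the largest integer in $\{0,1,\ldots,i+1\}$ such that $w[1..j]$ is a suffix of $w[1..i]\,a$. Such a $j$ always exists because $j=0$ is always admissible (the empty prefix is trivially a suffix), so $\delta$ is total and deterministic. For the subword version I make $q_n$ absorbing: $\delta(q_n,a) = q_n$ for every $a$. For the suffix version I instead set $\delta(q_n,a) = \delta'(q_n, a)$ where $\delta'$ is computed by the same rule as for $i<n$ (treating $q_n$ as if its interpretation continued).

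To verify correctness I would carry out an induction on $|x|$ to show that the state reached from $q_0$ on input $x$ is exactly the one dictated by the invariant. The base case $x = \varepsilon$ is immediate since the longest prefix of $w$ that is a suffix of the empty word has length $0$. The inductive step follows directly from the definition of $\delta$, since appending a letter $a$ to a word $x$ whose longest matching suffix-prefix has length $i$ produces a word whose longest matching suffix-prefix is exactly the longest prefix of $w$ that is a suffix of $w[1..i]\,a$. Given the invariant, the suffix case is immediate: $\delta(q_0,x) = q_n$ iff the longest prefix of $w$ that is a suffix of $x$ has length $n$, i.e., iff $w$ is a suffix of $x$. For the subword case, the absorbing property of $q_n$ together with the invariant shows that $q_n$ is reached iff at some prefix $x' \sqsubseteq x$ we had $w$ as a suffix of $x'$, i.e., iff $w$ occurs as a subword of $x$. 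The only mildly delicate point is justifying that the "largest $j \le i+1$" rule for $\delta$ correctly tracks the longest matching suffix-prefix; this reduces to the observation that any suffix-prefix match of $x a$ of length $j$ gives a suffix-prefix match of $w[1..i]\,a$ of length $j$ (by the inductive hypothesis applied to $x$), and conversely.
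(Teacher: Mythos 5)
Your construction and correctness argument are exactly the standard KMP-style pattern-matching automaton that the paper invokes without proof (it simply cites Cormen et al., \S 32.3), so your proposal is correct and takes essentially the same approach. The one point worth being explicit about in the inductive step is the bound $j\leq i+1$ on any suffix-prefix match of $xa$ (since $w[1..j]$ a suffix of $xa$ with $j\geq 1$ forces $w[1..j-1]$ to be a suffix of $x$), which is what licenses comparing against $w[1..i]\,a$; you gesture at this and it goes through.
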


\section{Interlacing}

Suppose $y$ is a subword of every $x$-bordered word.  In this
case we say $x$ is {\it interlaced by} $y$.  For example,
it is easy to check that $000100$ is interlaced by $1000$ when
the underlying alphabet $\Sigma$ is $\{0,1\}$.   The following lemma gives
the fundamental property of interlacing:

\begin{lemma}
Suppose $x$ is interlaced by $y$, and suppose $z$ is a word satisfying
$|z|_y = |z|_x + k$.  Then for all $t$ we have
$|zt|_y \geq |zt|_x + k-1$.  In particular,
$|t|_y \geq |t|_x - 1$ for all $t$.
\label{count}
\end{lemma}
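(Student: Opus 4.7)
The plan is to track the quantity $f(u) := |u|_y - |u|_x$ as $u$ ranges over the prefixes of $zt$ extending $z$. Since $f(z) = k$, the bound $|zt|_y \geq |zt|_x + k - 1$ amounts to saying that $f$ never dips below $k - 1$ after position $|z|$. Appending one letter changes $f$ by $-1$ (a \emph{drop}, when $x$ ends at the new position but $y$ does not), $0$, or $+1$ (an \emph{up}, when $y$ ends but $x$ does not). Hence the claim reduces to showing that between any two consecutive drops occurring in $t$ there lies at least one up.

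If such drops occur at $t$-positions $d_1 < d_2$, then $x$ ends at $zt$-positions $|z|+d_1$ and $|z|+d_2$, so the substring $zt[\,|z|+d_1-|x|+1\,..\,|z|+d_2\,]$ is $x$-bordered and by interlacing contains an occurrence of $y$. The goal is to produce such a $y$-occurrence whose end, in $t$-coordinates, lies strictly in $(d_1, d_2)$ at a position where no $x$-occurrence ends.

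I would split into cases on how $y$ sits in $x$. If $y$ is a suffix of $x$ then every $x$-ending is also a $y$-ending, so no drop ever occurs. If $y$ is not a subword of $x$, the $y$-occurrence cannot fit entirely inside the prefix $x$ of the window; restricting attention to the $x$-bordered window between the last $x$-ending preceding $d_2$ and $d_2$ itself places its end strictly into the desired open interval at a position where no $x$-occurrence ends. The delicate case is $y$ a subword of $x$ that is not a suffix of $x$: every $x$-ending is then a drop, and letting $r_{\max}$ be the rightmost starting position of $y$ inside $x$ and $\delta_{\min} := |x|-r_{\max}-|y|+1 \geq 1$, the copy of $x$ ending at $|z|+d_2$ forces a $y$-occurrence ending at $|z|+d_2-\delta_{\min}$. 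The main obstacle is to verify $d_2-d_1 > \delta_{\min}$ so that this end lies in $(d_1,d_2)$; I would derive this by periodicity, since otherwise the two copies of $x$ at $|z|+d_i-|x|+1$ would overlap with shift $d_2-d_1 < |x|$, making $d_2-d_1$ a period of $x$, and shifting the copy of $y$ at $r_{\max}$ by this period would produce another copy of $y$ in $x$ starting at $r_{\max}+(d_2-d_1) > r_{\max}$, contradicting the maximality of $r_{\max}$. Finally, the ``in particular'' claim is the case $z=\varepsilon$, $k=0$.
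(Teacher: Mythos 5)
Your proof is correct, and while it rests on the same key fact as the paper's argument --- interlacing forces an occurrence of $y$ inside the $x$-bordered factor spanned by two occurrences of $x$ --- your execution is genuinely more detailed than the paper's, which disposes of the lemma in two sentences (``between any two consecutive occurrences of $x$ there must be at least one occurrence of $y$,'' hence $i$ new $x$'s yield at least $i-1$ new $y$'s). That sketch silently passes over exactly the points you work hardest on: why the $y$-occurrences attached to different consecutive pairs of $x$-occurrences are \emph{distinct}, and why they are \emph{new} (end inside $t$), both of which are delicate precisely when $y$ is an internal factor of $x$. Your reformulation via the potential $f(u)=|u|_y-|u|_x$ with drops and ups gives a clean reduction (one up strictly between consecutive drops suffices), and your trichotomy --- $y$ a suffix of $x$ (no drops at all), $y$ not a factor of $x$ (the interlacing witness must protrude past the last $x$-end preceding $d_2$, and cannot end at $d_2$ since $d_2$ is a drop), and $y$ an internal non-suffix factor of $x$ (where the $r_{\max}$/periodicity argument shows consecutive $x$-ends are more than $\delta_{\min}$ apart) --- supplies the missing bookkeeping; note that in the last case interlacing is automatic, since $y$ is a factor of $x$, and the claim that every $x$-end is a drop there uses $|y|\le|x|$, which holds for the same reason. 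I verified the periodicity step: if $d_2-d_1\le\delta_{\min}\le|x|-|y|<|x|$, then $d_2-d_1$ is a period of $x$ and shifting the occurrence of $y$ at $r_{\max}$ stays inside $x$, contradicting maximality. So your argument is a valid, and in fact more rigorous, version of the paper's proof.
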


\begin{proof}
Identify the starting positions of all occurrences of $x$ in
$zt$.  Since $x$ is interlaced by $y$, between any two consecutive
occurrences of $x$, there must be at least one occurrence of $y$.
So if $zt$ has $i$ more occurrences of $x$ than $z$ does, then
$zt$ must have at least $i-1$ more occurrences of $y$ than
$z$ does.    
\end{proof}

\section{The language \texorpdfstring{$L_{x<y}$}{Lx<y}}
\label{lt1}

\begin{theorem}  The language $L_{x<y}$ is regular if and only if
either 
$x$ is interlaced by $y$ or
$y$ is interlaced by $x$.
\label{main1}
\end{theorem}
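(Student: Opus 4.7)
The plan is to prove the two directions separately, both leveraging the bordered-word machinery already developed.

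For sufficiency, suppose without loss of generality that $x$ is interlaced by $y$ (the mirror case is symmetric). I would exploit Lemma~\ref{count}: the quantity $\delta(z) := |z|_y - |z|_x$ is always $\geq -1$, and moreover the lemma applied with $k = 2$ shows that once $\delta(z) \geq 2$, we have $\delta(zt) \geq 1$ for every $t$, hence $zt \in L_{x<y}$. I would therefore construct a DFA whose state tracks the current pattern-matching automaton states for $x$ and $y$ (via Theorem~\ref{pma}), together with the value of $\delta$ capped at $2$ (values in $\{-1, 0, 1, \text{trapped}\}$). Completions of $x$ or $y$ signaled by the PMAs update $\delta$; acceptance is $\delta \geq 1$ or trapped. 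This is finite-state, so $L_{x<y}$ is regular.

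For necessity, I argue the contrapositive: assuming neither interlacing holds, I show that $L_{x<y}$ has infinitely many Myhill--Nerode classes. Pick $X$ that is $x$-bordered with $|X|_y = 0$, and $Y$ that is $y$-bordered with $|Y|_x = 0$; by Lemma~\ref{ls} write $X = (uv)^{e+1}u$ with $x = (uv)^e u$, and $Y = (u'v')^{f+1}u'$ with $y = (u'v')^f u'$. Set $W_i := (uv)^i u$ and $V_j := (u'v')^j u'$ for $i \geq e+1$, $j \geq f+1$. By Lemma~\ref{uvu}, $|W_i|_x$ and $|V_j|_y$ grow linearly in $i$ and $j$ with positive integer slopes $d_x, d_y$. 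Granting the key claim below that $|W_i|_y = 0$ and $|V_j|_x = 0$ throughout, once $i, j$ are large enough for the boundary region at the join of $W_i V_j$ to stabilize, one obtains
\[
|W_i V_j|_y - |W_i V_j|_x \;=\; d_y j - d_x i + C
\]
for some integer constant $C$. Given any $i \neq i'$ whose gap is a sufficient multiple of $d_y$, one chooses $j$ making this expression differ in sign between $W_i V_j$ and $W_{i'} V_j$, so the suffix $V_j$ distinguishes them. This yields infinitely many Myhill--Nerode classes, contradicting regularity.

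The hardest step is the key claim: $y$ is not a factor of the infinite periodic word $(uv)^\infty$, and symmetrically $x$ is not a factor of $(u'v')^\infty$. Were it to fail, $|W_i|_y$ could itself grow linearly, possibly at the same rate as $|W_i|_x$, nullifying the distinguishing argument. To rule this out I would argue: if $y$ were a factor of $(uv)^\infty$ but not of $X$, a count of residues modulo $|uv|$ among the starting positions available inside $X$ forces $|y| \geq |x| + 2$ and pins down the residue of the starting offset of $y$ in $(uv)^\infty$ to the set not covered by $X$; within that constrained offset, a short arithmetic check places a copy of $x = (uv)^e u$ inside $y$. But then every $y$-bordered word contains $y$ as a prefix and thus contains $x$, contradicting the hypothesis that $y$ is not interlaced by $x$.
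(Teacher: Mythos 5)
Your proposal is correct, and the sufficiency direction is essentially the paper's construction: a DFA built on the two pattern-matching automata of Theorem~\ref{pma} that tracks the difference $|z|_y-|z|_x$, which Lemma~\ref{count} confines to finitely many relevant values together with a trap state (the paper spells out the asymmetric acceptance conditions in the two interlacing cases, which you elide as ``symmetric,'' but the adaptation is routine). The necessity direction shares the paper's skeleton --- contrapositive, the Lemma~\ref{ls} decompositions of the two bordered witnesses, the linear occurrence counts from Lemma~\ref{uvu}, and the crucial claim that $y$ never occurs in the powers coming from the $x$-decomposition (and vice versa) --- but you finish differently. The paper intersects $L_{x<y}$ with $(uv)^e(uv)^+(pq)^f(pq)^+$, pulls back under the morphism $a\mapsto uv$, $b\mapsto pq$, and applies the pumping lemma to $\{a^ib^j : \cdots\}$; this requires Theorem~\ref{factor} plus Lyndon--Sch\"utzenberger to show the preimage of each word is a singleton, i.e., that $uv$ and $pq$ do not commute. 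Your Myhill--Nerode argument on the words $W_iV_j$ avoids that unique-factorization step entirely, at the cost of the (correct, if slightly fussier) bookkeeping needed to choose, for each pair $i\neq i'$ with gap at least $d_y$, a suffix $V_j$ witnessing a sign change; both routes must and do handle the stabilizing boundary occurrences. Your proof of the key non-occurrence claim also differs: the paper takes a minimal $i$ with $x$ a subword of $(uv)^iu$ and deduces that $y$ is a subword of $x$ and hence of $s$, a contradiction; you argue via starting positions modulo $|uv|$ that an occurrence of $y$ in $(uv)^\infty$ falling outside $X$ must contain a period-aligned copy of $x$, so $y$ would be interlaced by $x$. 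Your residue argument goes through (indeed the detour through $|y|\geq|x|+2$ is unnecessary --- taking the occurrence of $x$ at offset $0$ or $|uv|$ inside the occurrence of $y$ suffices), but the paper's minimality argument is shorter.
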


\begin{proof}
$\Longleftarrow$:  There are two cases: 
(i) $x$ is interlaced by $y$; and
(ii) $y$ is interlaced by $x$.

\bigskip

\noindent{\bf Case (i)}:
Using Lemma~\ref{count}, we can
build a finite automaton $M$ recognizing $L_{x<y}$ as follows:
using the pattern-matching automata for $x$ and $y$ described in Section~\ref{autom}, on input $z$ the machine $M$ records whether
\begin{itemize}
\item[(a)] $|z|_x= |z|_y + 1$;
\item[(b)] $|z|_x = |z|_y$; 
\item[(c)] $|z|_x = |z|_y - 1$, and
$|z'|_x \geq |z'|_y - 1$ for all prefixes
$z'$ of $z$;
\item[(d)] $|z'|_x \leq |z'|_y- 2$ for some
prefix $z'$ of $z$.
\end{itemize}
Of course we do not maintain the actual numbers
$|z|_x$ and $|z|_y$ in $M$, but only which of (a)--(d)
hold.
Lemma~\ref{count} implies that the four cases above cover all the possibilities. It is not possible to have $|z|_x\geq |z|_y+2$, and
if (d) ever occurs, we know from Lemma~\ref{count}
that $|z|_x < |z|_y$ for all words $z$ extending
$z'$.  So in this case the correct action is
for the automaton to remain in
state (d), an accepting state that loops to itself on all inputs.  The automaton accepts the input if and only if
it is in the states corresponding to conditions  (c) and (d).

\bigskip
\noindent{\bf Case (ii)}:
Using Lemma~\ref{count}, as in Case (i), we can
build a finite automaton recognizing $L_{x<y}$ as follows:
using the pattern-matching automata for $x$ and $y$ described in Section~\ref{autom}, on input $z$ the machine $M$ records whether
\begin{itemize}
\item[(a)] $|z|_y = |z|_x + 1$;
\item[(b)] $|z|_y = |z|_x $, and
$|z'|_y \geq |z'|_x $ for all prefixes
$z'$ of $z$;
\item[(c)] $|z'|_y \leq |z'|_x - 1$ for some
prefix $z'$ of $z$.
\end{itemize}
Lemma~\ref{count} implies that the three cases above cover all the possibilities. It is not possible to have $|z|_y\geq |z|_x+2$, and
if (c) ever occurs, we know from Lemma~\ref{count}
that $|z|_x  \geq |z|_y$ for all words $z$ extending
$z'$.  So in this case the correct action is
for the automaton to remain in
state (c), a rejecting ``dead'' state that loops to itself on all inputs.  The automaton accepts the input if and only if
it is in the state corresponding to condition (a).

\bigskip
\noindent $\Longrightarrow$:  
We proceed by proving the contrapositive. So suppose that there is some $y$-bordered word $r$ such
that $x$ is not a subword of $r$, and some
$x$-bordered word $s$ such that $y$ is not a subword
of $s$.
Using Lemma~\ref{ls}, we know that there are words
$u, v, p, q$ and natural numbers $e, f$ such that $r=(uv)^{e+1} u$, and $y=(uv)^e u$, and $s=(pq)^{f+1} p$, and $x=(pq)^f p$.

Suppose that $x$ is a subword of $(uv)^i u$ for some $i\geq 0$.
Since $x$ is not a subword of $r$, we know that 
$i \geq e+2$.  If $x$ is a subword of $(uv)^{e+2} u$ and not a subword
of $(uv)^{e+1} u$, then $y = (uv)^e u$ must be a subword of $x$.  
But then $y$ is a subword of $s$, a contradiction.  So $x$ is not
a subword of $(uv)^i u$ for any $i$.
By exactly the same reasoning we deduce that $y$ is not a subword
of $(pq)^j p$ for any $j\geq 0$.
 
Let $c = |(uv)^{e+1}|_y$ and $d = |(uv)^{e+2}|_y - |(uv)^{e+1}|_y$.
Similarly, define
$c' = |(pq)^{f+1}|_x$ and
$d' = |(pq)^{f+2}|_x - |(pq)^{f+1}|_x$.
Consider a word $z = (uv)^i (pq)^j $, where $i>e$ and $j>f$.  
From above and Lemma~\ref{uvu}, we know that 
$|(uv)^i|_x = 0$ for all $i\geq 0$ and $|(pq)^j|_x = 
(j-f)d' + c'-d'$ for $j > f$.
Let $m$ be the number of additional occurrences of $x$ that
straddle the boundary between $(uv)^{e+1}$ and $(pq)^{f+1}$. That is, $m$ is the number of distinct values for $k$, such that $x$ is a subword of $(uv)^{e+1}(pq)^{f+1}$ starting at index $k$ and $(e+1)|uv|+2-|x|\leq k\leq (e+1)|uv|+1$.
Similarly, we know that $|(uv)^i|_y = (i-e)d +
c-d$ for $i > e$ and $|(pq)^j|_y = 0$ for all $j\geq 0$.
Let $n$ be the number of additional occurrences of $y$ that
straddle the boundary between $(uv)^{e+1}$ and $(pq)^{f+1}$. The precise definition of $n$ is given as above by replacing $m$ and $x$ with $n$ and $y$ respectively.
Thus $z$ has $(j-f)d'+c'-d'+m$ occurrences of $x$ and
$(i-e)d +c-d +n $ occurrences of $y$.  

Now assume, contrary to what we want to prove,
that $L_{x<y}$ is regular.  Define $L = L_{x<y} \cap (uv)^e(uv)^+ (pq)^f(pq)^+ $.  Then $L$ is regular.
Define a morphism $h: \{a,b\}^* \rightarrow \Sigma^*$ as
follows:  $h(a) = uv$, and
$h(b) = pq$.  We claim that
$h^{-1}(  z )= \{  a^i b^j \}$.  One direction is clear.   For the other, suppose $h^{-1}(z)$
included some word other than $a^i b^j$.
Then by Theorem~\ref{factor}, we know that
$uv$ and $pq$ commute.  But then by
the Lyndon-Sch\"utzenberger theorem
\cite{Lyndon&Schutzenberger:1962}, $uv$ and
$pq$ are both powers of some word $t$.  But then
$x$ would be a subword of $(uv)^\ell u$
for some $\ell$, which we already saw to
be impossible.  

By a well-known theorem (e.g.,
\cite[Theorem 3.3.9]{Shallit:2009}), $h^{-1}(L)$
is regular.  But $h^{-1}(L) = 
\{ a^i b^j  \ :\ (i-e)d +c-d +n < (j-f)d'+c'-d'+m, \ \text{for} \ i>e, \ j>f  \}$ which,
using the pumping lemma, is not regular.  
\end{proof}

\section{The language \texorpdfstring{$L_{x=y}$}{Lx=y}}
\label{eq1}

\begin{theorem}  The language $L_{x=y}$ is regular if and only if
either 
$x$ is interlaced by $y$ or
$y$ is interlaced by $x$.
\label{main2}
\end{theorem}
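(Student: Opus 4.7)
My plan is to closely parallel the proof of Theorem~\ref{main1}, adapting it from strict inequality to equality in both directions.

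For the $(\Longleftarrow)$ direction, I will assume WLOG that $x$ is interlaced by $y$ (the case where $y$ is interlaced by $x$ is symmetric). Lemma~\ref{count} yields $|z|_x - |z|_y \leq 1$ for all $z$, and shows that once a prefix $z'$ satisfies $|z'|_x \leq |z'|_y - 2$, no extension can restore equality. I will build a DFA by combining the pattern-matching automata of Theorem~\ref{pma} for $x$ and $y$ with a 4-state counter whose states are (a) $|z|_x = |z|_y + 1$; (b) $|z|_x = |z|_y$ (accepting); (c) $|z|_x = |z|_y - 1$ with no prefix having reached $\leq -2$; (d) some prefix has achieved $|z'|_x \leq |z'|_y - 2$, a permanent rejecting dead state.

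For the $(\Longrightarrow)$ direction, I proceed by contrapositive, exactly mirroring the setup of Theorem~\ref{main1}. From the assumption that neither interlacing holds, Lemma~\ref{ls} gives decompositions $y = (uv)^e u$ and $x = (pq)^f p$ such that no $(uv)^i$ contains $x$ and no $(pq)^j$ contains $y$. For $z = (uv)^i (pq)^j$ with $i > e, j > f$, Lemma~\ref{uvu} expresses $|z|_x$ and $|z|_y$ as linear functions of $i$ and $j$. Assuming $L_{x=y}$ is regular, intersecting with the regular language $(uv)^{e+1}(uv)^* (pq)^{f+1}(pq)^*$ and pulling back under the morphism $h(a) = uv$, $h(b) = pq$ (injective on the relevant image because $uv$ and $pq$ do not commute, by Theorem~\ref{factor}) yields that
\[
S = \{a^i b^j \;:\; i > e,\ j > f,\ (i-e)d + c - d + n = (j-f)d' + c' - d' + m\}
\]
is regular. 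Whenever this linear Diophantine equation has solutions in the required range, those solutions form an infinite arithmetic progression in $(i, j)$; the corresponding set $S$ is then non-regular by a standard pumping-lemma argument, contradicting the regularity of $L_{x=y}$.

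The main obstacle is the degenerate case in which the Diophantine equation has \emph{no} solutions with $i > e, j > f$, which arises precisely when $\gcd(d, d')$ fails to divide the constant term. My plan for this case is to enlarge the witness family by inserting a short padding word $\alpha$ between $(uv)^i$ and $(pq)^j$, or by using boundary-adjusted variants such as $(uv)^i (pq)^j p$. Each such choice shifts the constant in the Diophantine equation by an amount controlled by the boundary straddle counts of $x$ and $y$; the key technical step I would need to carry out is to show that some sufficiently short $\alpha$ always produces a solvable equation, so that the arithmetic-progression argument above again yields a contradiction. Making this last claim precise --- and in particular establishing that at least one suitable $\alpha$ always exists, regardless of the arithmetic of $d$ and $d'$ --- is the most delicate part of the argument.
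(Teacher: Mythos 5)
Your $\Longleftarrow$ direction is correct and is exactly the paper's argument: take the Case (i) automaton from Theorem~\ref{main1}, make (b) the unique accepting state, and note via Lemma~\ref{count} that once a prefix reaches $|z'|_x \leq |z'|_y - 2$ equality can never be restored, so (d) is a rejecting sink.

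The $\Longrightarrow$ direction, however, has a genuine gap, and you have located it yourself: when $\gcd(d,d')$ does not divide the constant term, the set $S$ is empty, hence regular, and no contradiction is obtained. Your proposed repair --- inserting a padding word $\alpha$ \emph{between} $(uv)^i$ and $(pq)^j$ --- is left as an unproven claim (``the key technical step I would need to carry out\ldots''), and it is also the wrong place to pad: a word sitting at the junction perturbs both straddle counts $m$ and $n$ simultaneously and in a way you cannot easily control. The paper pads on the \emph{outside} instead. Set $A = (-(m+c')) \bmod d'$ and $B = (-(n+c)) \bmod d$, let $w$ be the shortest suffix of $(uv)^{e+2}$ such that $wz$ has $(i-e)d+c-d+n+B$ occurrences of $y$, and let $w'$ be the shortest prefix of $(pq)^{f+2}$ such that $zw'$ has $(j-f)d'+c'-d'+m+A$ occurrences of $x$. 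Because $w$ extends the $uv$-periodic part (contributing no new $x$'s, and new $y$'s in increments controlled by Lemma~\ref{uvu}) and symmetrically for $w'$, the word $wzw'$ has exactly $(j-f+C)d'$ occurrences of $x$ and $(i-e+D)d$ occurrences of $y$ for some fixed $C,D \geq 0$. The equation $(i-e+D)d = (j-f+C)d'$ always has infinitely many solutions (take $i-e+D$ a multiple of $d'/\gcd(d,d')$, etc.), so the pumping argument goes through unconditionally. One further technicality you would need: after introducing $w$ and $w'$ the paper brackets the language with a fresh end-marker, $L = \#L'\#$, and uses the morphism $h(a')=\#w$, $h(a)=uv$, $h(b)=pq$, $h(b')=w'\#$, so that Theorem~\ref{factor} still forces the unique factorization $h^{-1}(\#wzw'\#) = \{a'a^ib^jb'\}$. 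Without some such device your inverse-homomorphism step does not obviously survive the padding.
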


\begin{proof}

The proof is quite similar to the case
$L_{x<y}$, and we indicate only what needs
to be changed.

\noindent $\Longleftarrow$:  Without loss of generality we can assume
that $x$ is interlaced by $y$.
Using Lemma~\ref{count} we can build a finite
automaton recognizing $L_{x=y}$ just as we did
for $L_{x<y}$, using case (i).  The only difference
now is that the accepting state corresponds
to (b).

\bigskip
\noindent $\Longrightarrow$:  
Proceeding by contraposition, suppose that there is some $y$-bordered word $r$ such
that $x$ is not a subword of $r$, and some
$x$-bordered word $s$ such that $y$ is not a subword
of $s$.  Once again, we follow the argument
used for $L_{x<y}$, but there is one difference.

Recall that $z = (uv)^i (pq)^j $ for some $i>e$ and $j>f$.    By
the argument for $L_{x<y}$ we know that
$z$ has $(j-f)d'+c'-d'+m$ occurrences of $x$ and
$(i-e)d +c-d +n $ occurrences of $y$.  Let
$A = (-(m+c')) \bmod d'$ and $B = (-(n+c)) \bmod d$.
Let $w$ be the shortest suffix of
$(uv)^{e+2}$ such that $wz$ has 
$(i-e)d +c-d +n + B$ occurrences of $y$; let
$w'$ be the shortest prefix of $(pq)^{f+2}$ such
that $zw'$ has $(j-f)d'+c'-d'+m + A$ occurrences
of $x$.  Then by our construction 
$wzw'$ has $(j-f + C)d'$ occurrences of $x$ and
$(i-e+D)d$ occurrences of $y$, for some
$C, D \geq 0$.  

Now assume, contrary to what we want to prove,
that $L_{x=y}$ is regular.  Define $L' = L_{x=y} \cap w (uv)^e(uv)^+ (pq)^f(pq)^+ w'$.  Then $L'$ is regular.
Define $L = \# L' \#$, where $\#$ is a 
new symbol not in the alphabet $\Sigma$; then $L$ is regular.
Define a morphism $h: \{a,b,a',b'\}^* \rightarrow \Sigma^*$ as
follows:  $h(a') = \#w $, $h(a) = uv$,
$h(b) = pq$, and $h(b') = w'\#$.  We claim that
$h^{-1}( \# w z w' \#)= \{ a' a^i b^j b' \}$.  One direction is clear,
and the other follows from Theorem~\ref{factor}.
By a well-known theorem (e.g.,
\cite[Theorem 3.3.9]{Shallit:2009}), $h^{-1}(L)$
is regular.  But $h^{-1}(L) = 
\{ a' a^i b^j b' \ :\ (i-e+D)d = (j-f+C)d', \ \text{for} \ i>e, \ j>f \}$ which,
using the pumping lemma, is not regular.  
\end{proof}

\section{Testing the criteria}

Given $x,y$ we can test if there is some $y$-bordered word $z$ such that $x$ is not a subword of $z$, as follows: create a 
DFA recognizing the language 
$$(\Sigma^* x \Sigma^*)^c \ \cap \ y \Sigma^+  \ \cap \ \Sigma^+ y.$$
A simple construction gives such a DFA $M$ with
at most $N = (|x|+1)(|y|+3)(|y|+2)$ states and at most
$N |\Sigma|$ transitions.  

This can be
improved to $N' = (|x|+1)(2|y|+3)$ states as follows:
first build a DFA of $(2|y|+3)$ states
recognizing the language $y \Sigma^+  \ \cap \ \Sigma^+ y$ by ``grafting'' the
DFA, $A_1$, of $|y|+3$ states recognizing 
$y \Sigma^+ $ onto the DFA, $A_2$, of $|y|+2$ states
recognizing $\Sigma^+ y$.  This can be done
by modifying the pattern-matching DFA described
in Theorem~\ref{pma}. Simply replace transitions to the final state in $A_1$ with transitions to the appropriate states in $A_2$. The final state of $A_1$ and the initial state of $A_2$ both become unreachable.  Then form the direct product
with the DFA for $(\Sigma^* x \Sigma^*)^c$.
The resulting DFA
has $N'$ states.
We can then use a depth-first 
search on the underlying transition graph of $M$ to check if $L(M) \ne \emptyset$. 

Thus, we have proved:

\begin{corollary}
There is an algorithm running in time $O(|\Sigma||x||y|)$ that decides whether the criteria
of Theorems~\ref{main1} and \ref{main2} hold.
\end{corollary}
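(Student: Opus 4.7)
\medskip
\noindent\textbf{Proof proposal.} The plan is to reduce each of the two interlacing conditions to an emptiness test for a small regular language, build a DFA of size $O(|x||y|)$ for each, and then run depth-first search.

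First I would unfold the definition of interlacing: $y$ is interlaced by $x$ iff every $y$-bordered word contains $x$ as a subword, which is equivalent to
\[ L_1 \;=\; (\Sigma^* x \Sigma^*)^c \,\cap\, y\Sigma^+ \,\cap\, \Sigma^+ y \;=\; \emptyset, \]
and symmetrically, $x$ is interlaced by $y$ iff the analogous language $L_2$ (with the roles of $x$ and $y$ exchanged) is empty. Therefore the hypothesis of Theorems~\ref{main1} and~\ref{main2} holds iff at least one of $L_1$, $L_2$ is empty, so it suffices to produce a DFA of the claimed size for each and test emptiness.

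Next I would assemble each DFA by small pieces plus a direct product. By Theorem~\ref{pma}, the pattern-matching DFA for $x$ has $|x|+1$ states, and by swapping accept and reject states it recognizes $(\Sigma^* x \Sigma^*)^c$; its transition table can be written down in time $O(|\Sigma||x|)$. For $y\Sigma^+ \cap \Sigma^+ y$ I would use the $(2|y|+3)$-state machine obtained by the grafting construction described in this section, built in time $O(|\Sigma||y|)$. The direct product of these two automata recognizes $L_1$, has $(|x|+1)(2|y|+3) = O(|x||y|)$ states, and $O(|\Sigma||x||y|)$ transitions, with each transition computable in constant time from its two component transitions. A depth-first search from the start state for a reachable accepting state then decides emptiness in time linear in the transition count. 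Repeating the whole procedure with $x$ and $y$ swapped handles $L_2$ within the same budget.

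The only real point to watch is keeping each of the pattern-matching pieces to size $O(|\Sigma|(|x|+|y|))$ in both states and construction time; this is standard via KMP-style failure-function computation. Once those pieces are in place, complementation is free, the product construction is linear in its output size, and DFS is linear in the graph size, so the total running time is $O(|\Sigma||x||y|)$, as claimed.
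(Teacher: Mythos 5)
Your proposal is correct and follows essentially the same route as the paper: reduce each interlacing condition to emptiness of $(\Sigma^* x \Sigma^*)^c \cap y\Sigma^+ \cap \Sigma^+ y$ (and its counterpart with $x$ and $y$ swapped), build the $(|x|+1)(2|y|+3)$-state product DFA via the grafting construction, and decide emptiness by depth-first search in time linear in the $O(|\Sigma||x||y|)$ transitions. The only addition beyond the paper's argument is your explicit remark about constructing the pattern-matching automata in linear time via failure functions, which is a standard detail the paper leaves implicit.
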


\begin{corollary}
If there exists a $y$-bordered word $z$ such that $x$ is not a subword of $z$, then
$|z| < N'$.
\label{four}
\end{corollary}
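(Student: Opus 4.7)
The plan is to exploit the DFA $M$ constructed in the paragraph preceding the corollary. By construction $L(M) = (\Sigma^* x \Sigma^*)^c \cap y\Sigma^+ \cap \Sigma^+ y$, which is precisely the set of $y$-bordered words $z$ that do not contain $x$ as a subword: since $y$ is nonempty, membership in $y\Sigma^+ \cap \Sigma^+ y$ forces $y$ to be both a prefix and a suffix of $z$ and forces $|z| > |y|$ (so in particular $z \neq y$), matching the definition of $y$-bordered. Hence the hypothesis of the corollary says exactly that $L(M) \neq \emptyset$, and $M$ has $N'$ states.

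The remaining step is the standard ``shortest accepted word is shorter than the state count'' argument. Let $z$ be any shortest word in $L(M)$, and let $q_0, q_1, \ldots, q_{|z|}$ be the sequence of states of $M$ visited while reading $z$ from the start state. If $q_i = q_j$ for some $0 \leq i < j \leq |z|$, then deleting the letters at positions $i+1, \ldots, j$ of $z$ would yield a strictly shorter word still accepted by $M$ (the computation follows the same trajectory up to $q_i = q_j$ and then resumes unchanged), contradicting minimality. Thus the $|z|+1$ visited states are pairwise distinct, giving $|z|+1 \leq N'$, i.e., $|z| < N'$.

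No real obstacle arises: all the structural work has already been absorbed into the explicit construction of the DFA $M$ with $N'$ states in the preceding paragraph, and the size bound is just the usual pigeonhole consequence on shortest accepted words. The only sanity check required is that $L(M)$ really does coincide with the set of $y$-bordered words avoiding $x$ as a subword, which is a direct unpacking of definitions.
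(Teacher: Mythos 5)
Your proof is correct and follows exactly the paper's approach: the paper's one-line proof is precisely the observation that a DFA with $|Q|$ states accepting a nonempty language accepts a word of length at most $|Q|-1$, which you have simply spelled out via the pigeonhole/state-repetition argument, together with the (correct) check that $L(M)$ coincides with the set of $y$-bordered words avoiding $x$.
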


\begin{proof}
If $M = (Q,\Sigma, \delta, q_0, F)$ accepts any word at all, then it accepts a
word of length at most $|Q|-1$.   
\end{proof}

\section{Improving the bound in Corollary~\ref{four}}

As we have seen in Corollary~\ref{four}, if $x$ is not a subword of some
$y$-bordered word, then there is a relatively short
``witness'' to this fact.
We now show that this witness can be taken to be of the form $yty$ for some $t$ of \emph{constant length}. The precise constant depends on the cardinality of
the underlying alphabet $\Sigma$.  In Corollary~\ref{c3} we prove 
that if $|\Sigma| \geq 3$, then this constant is
$1$.  In Corollary~\ref{c2} we prove that if
$|\Sigma| = 2$, then this constant is $3$.

\begin{theorem}
Suppose $\Sigma$ is an alphabet that contains at least three symbols,
and let $x, y \in \Sigma^*$.  Without loss of generality
assume that $\{ 0, 1, 2 \} \subseteq \Sigma$.   If $x$ is a subword of
$y0y$ and $y1y$ and $y2y$, then $x$ is a subword of $y$.
\label{t5}
\end{theorem}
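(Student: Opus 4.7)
We argue by contradiction. Suppose $x$ is not a subword of $y$, but $x$ is a subword of $yay$ for every $a \in \{0,1,2\}$. Because $x$ is not a subword of $y$, any occurrence of $x$ in $yay$ must include the middle letter; hence for each $a$ there is an index $i_a \in \{1,\ldots,|x|\}$ with $x[i_a]=a$, $x[1..i_a-1]$ a suffix of $y$, and $x[i_a+1..|x|]$ a prefix of $y$. Since $0,1,2$ are distinct, $i_0,i_1,i_2$ are pairwise distinct; after relabelling, assume $i_0<i_1<i_2$, and set $p=i_1-i_0$, $q=i_2-i_1$, $g=\gcd(p,q)$.

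Next, I extract periodicities from nested-border relations. The three words $\alpha_{i_a} := x[1..i_a-1]$ are suffixes of $y$ of strictly increasing lengths and prefixes of $x$ of strictly increasing lengths, so each shorter $\alpha_{i_a}$ is both a prefix and a suffix of each longer $\alpha_{i_b}$, i.e., is a border of it. Reading off the associated periods, $\alpha_{i_2}$ has periods $q$ and $p+q$. A symmetric argument with the prefixes $\beta_{i_a} := x[i_a+1..|x|]$ of $y$ shows that $\beta_{i_0}$ has periods $p$ and $p+q$.

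Now I apply Fine and Wilf's theorem. On $\beta_{i_0}$ (of length $|x|-i_0 \geq i_2-i_0 = p+q$), the Fine--Wilf condition is $|x|-i_0 \geq 2p+q-g$, which is automatic whenever $p\mid q$ (then $g=p$ and the bound reduces to $p+q$). In that case $\beta_{i_0}$ has period $g=p$, and iterating from $x[i_1]$ in steps of $p$, valid because $p\mid q$, reaches $x[i_2]$ and forces $x[i_1]=x[i_2]$, i.e.\ $1=2$, a contradiction. The case $q\mid p$ is symmetric: Fine--Wilf applied to $\alpha_{i_2}$ gives period $g=q$, hence $x[i_0]=x[i_1]$, i.e.\ $0=1$.

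The main obstacle is the residual case in which neither of $p,q$ divides the other, so $g<\min(p,q)$. My plan there is to invoke Lemma~\ref{ls}: the word $w := x[i_0+1..i_2-1]$ is a border of $y$ (a prefix as the initial segment of $\beta_{i_0}$ of length $|w|$, and a suffix as the final segment of $\alpha_{i_2}$), so $y=(UV)^{e+1}U$ and $w=(UV)^e U$ for some $U\in\Sigma^+$, $V\in\Sigma^*$, $e\geq 0$. Comparing $y[|w|+1]$ with $\beta_{i_0}[|w|+1]=x[i_2]=2$, and $y[|y|-|w|]$ with $\alpha_{i_2}[i_0]=x[i_0]=0$, pins down $V[1]=2$ and $V[|V|]=0$. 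The third split at $i_1$ then places the character $1$ at position $p$ of the prefix $w$ of $y$; propagating this constraint through the period-$|UV|$ structure of $y=(UV)^{e+1}U$ should force an additional occurrence of $x=\alpha_{i_0}\cdot 0\cdot w\cdot 2\cdot\beta_{i_2}$ inside $y$, contradicting the assumption that $x$ is not a subword of $y$. I expect this propagation step, and in particular the careful matching of the chunks of $x$ against the interior of $y$, to be the technical heart and main obstacle of the argument.
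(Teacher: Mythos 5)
Your setup is correct and coincides with the paper's: since $x$ is not a subword of $y$, each of the three occurrences must straddle the middle letter, giving distinct positions $i_0<i_1<i_2$ with $x[i_a]$ aligned over the inserted letter, $x[1..i_a-1]$ a suffix of $y$ and $x[i_a+1..n]$ a prefix of $y$; your identification of the periods $p$, $q$, $p+q$ on the nested prefixes/suffixes is also sound, and your disposal of the case $p\mid q$ (or $q\mid p$) is valid (indeed period $p$ alone suffices there, without Fine--Wilf). But the proof is not complete: in the generic case where neither of $p,q$ divides the other, the Fine--Wilf hypothesis genuinely fails, as you note, and what you offer instead is only a plan (``should force an additional occurrence,'' ``I expect this propagation step\dots to be the technical heart''). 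That residual case is the bulk of the theorem, and the proposed route --- extracting the $(UV)^{e+1}U$ border structure of $y$ from the border $w=x[i_0+1..i_2-1]$ and then ``propagating'' to manufacture an occurrence of $x$ inside $y$ --- is not carried out and is essentially a restart of the problem, so this is a genuine gap.

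What the sketch misses is that no global periodicity (hence no Fine--Wilf condition) is needed: two single-position equalities, each obtained by comparing two of the three alignments against the \emph{same} copy of $y$, already chain together. Concretely, write $y'$ for the right copy of $y$ in $yay$. Reading position $i_2-i_1$ of $y'$ through the $i_1$-alignment gives $x[i_2]=y'[i_2-i_1]$, and reading the same position of $y'$ through the $i_0$-alignment gives $y'[i_2-i_1]=x[i_0+i_2-i_1]=x[i_2-p]$; the index $i_0+i_2-i_1$ lies in the required range because $i_0<i_1<i_2\le n$. Then reading position $i_0+m+1-i_1$ of the left copy $y$ through the $i_2$-alignment gives $x[i_2-p]=y[i_0+m+1-i_1]$, and through the $i_1$-alignment gives $y[i_0+m+1-i_1]=x[i_0]$. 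Hence $x[i_2]=x[i_2-p]=x[i_0]$, i.e., two distinct letters of $\{0,1,2\}$ are equal --- a contradiction with no divisibility hypothesis at all. This four-step index chase is exactly how the paper closes the argument; your period-theoretic machinery is an over-strengthening that cannot be established in general, even though the two pointwise equalities it was meant to deliver are always available directly.
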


\begin{proof}
Assume, contrary to what we want to prove, that $x$ is not a subword of $y$.
Also assume that $|y| = m$ and $|x|= n$.
For $x$ to be a subword of $y0y$ (resp., $y1y$, $y2y$), then, it must be
that $x$ ``straddles'' the $y$---$y$ boundary.  More precisely, when we consider
where $x$ appears inside $y0y$, the first
symbol of $x$
must occur at or to the left of position $m+1$ of $y0y$ (resp., $y1y$, $y2y$).  Similarly, the last
symbol of $x$ must occur at or to the right of position $m+1$ of $y0y$ (resp., $y1y$,
$y2y$).  

For $a = 0, 1, 2$, label the $x$ that matches $yay$ as $x_a$, and assume that the position of
the $0$ that matches $x_0$ is $i$, the position of the $1$ that matches
$x_1$ is $j$, and the position of the $2$ that matches $x_2$ is $k$.
Note that $x_0 = x_1 = x_2 = x$; the indices just allow us to refer to the diagram
below.
Without loss of generality we can assume $1 \leq i < j < k \leq n$.
Thus we obtain a picture as in Figure~\ref{fig1}.  Here we have labeled
the two occurrences of $y$ as $y$ and $y'$, so we can refer
to them unambiguously.
Note that $i \geq 1$ and $k \leq m+1$.  Furthermore, note that 
$n \leq m+i$.

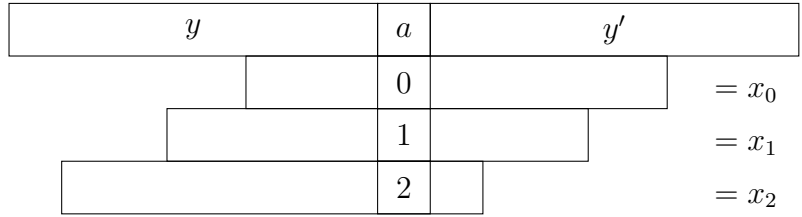
\begin{figure}[H]
\begin{center}
\begin{tikzpicture}[scale=0.7]
  \draw[black] (0,6) rectangle (7,5) node[pos=.5] {$y$};
  \draw[black] (7,6) rectangle (8,5) node[pos=.5] {$a$};
  \draw[black] (8,6) rectangle (15,5) node[pos=.5] {$y'$};
  
  \draw[black] (4.5,5) rectangle (12.5,4);
  \draw[black] (7,5) rectangle (8,4) node[pos=.5] {$0$};
  
  \draw[black] (3,4) rectangle (11,3);
  \draw[black] (7,4) rectangle (8,3) node[pos=.5] {$1$};
  
  \draw[black] (1,3) rectangle (9,2);
  \draw[black] (7,3) rectangle (8,2) node[pos=.5] {$2$};
  
  \node at (14, 4.3)   (x0) {$=x_0$};
  \node at (14, 3.3)   (x1) {$=x_1$};
  \node at (14, 2.3)   (x2) {$=x_2$};
  \end{tikzpicture}
\caption{Matches of $x$ against $y0y$, $y1y$, and $y2y$}
\label{fig1}
\end{center}
\end{figure}

We now use ``index-chasing'' to show that $x[k] = x[i]$; this will give us a
contradiction, since $x[k]= 2$ and $x[i] = 0$.
We will use the following identities, which can be deduced by observing Figure~\ref{fig1}.
\begin{align}
x_1[\ell] &= y[\ell+m+1-j] \text{  for } 1 \leq \ell \leq j-1; \label{l2}\\
x_2[\ell] &= y[\ell+m+1-k] \text{  for } 1 \leq \ell \leq k-1; \label{l3}\\
x_0[\ell] &= y'[\ell-i] \text{  for } i+1 \leq \ell \leq n; \label{l4}\\
x_1[\ell] &= y'[\ell-j] \text{  for } j+1 \leq \ell \leq n. \label{l5} 
\end{align}

Notice that $j+1 \leq k \leq n$, so we can
take $\ell = k$ in \eqref{l5} to get
$x[k] = y[k-j]$. Additionally, $k - j \geq 1$,
giving $i+1 \leq i+k-j$.  Also $i-j < 0 \leq n-k$, so $i+k-j \leq n$. Thus we can take $\ell = i+k-j$ in \eqref{l4} to obtain $y[k-j] = x[i+k-j]$.

Since $i \geq 1$ and $k -j \geq 1$,
we get $i + k - j \geq 2$.   Since $j-i \geq 1$ we have $i-j \leq -1$ and 
$i+k-j \leq k-1$.  Thus we can take $\ell = i+k-j$ in \eqref{l3} to get
$x[i+k-j] = y[i+m+1-j]$. Since $1 \leq i \leq j-1$, we 
we can take $\ell = i$ in \eqref{l2} to get $y[i+m+1-j] = x[i]$. Putting these observations together, we finally obtain 
$$2 = x[k] = y[k-j] = x[i+k-j] = y[i+m+1-j] = x[i] =0 ,$$
which produces the desired contradiction. 
\end{proof}

\begin{corollary}
Suppose $|\Sigma| \geq 3$.  Then $x$ is a subword of $yty$ for all $t$ with
$|t|= 1$ if and only if $y$ is interlaced by $x$.
\label{c3}
\end{corollary}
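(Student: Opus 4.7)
The plan is to prove both directions separately, with the reverse direction reducing immediately to Theorem~\ref{t5}.

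For the $(\Longleftarrow)$ direction, I would argue directly from the definitions. Assume $y$ is interlaced by $x$, i.e., $x$ is a subword of every $y$-bordered word. For any single letter $a \in \Sigma$, the word $yay$ satisfies $yay \ne y$ and has $y$ as both a prefix and a suffix, so $yay$ is $y$-bordered. Hence $x$ is a subword of $yay$, which is what we needed.

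For the $(\Longrightarrow)$ direction, I would invoke Theorem~\ref{t5}. Suppose $x$ is a subword of $yty$ for every $t$ with $|t|=1$. Since $|\Sigma| \geq 3$, we may choose three distinct letters, which (without loss of generality, renaming the alphabet if necessary) we take to be $0$, $1$, and $2$. By hypothesis $x$ is a subword of each of $y0y$, $y1y$, and $y2y$, so Theorem~\ref{t5} gives that $x$ is a subword of $y$ itself. But then for any $y$-bordered word $z$, since $y$ is a prefix of $z$, $x$ occurs as a subword of $z$ as well. Hence $x$ is a subword of every $y$-bordered word, which is the definition of $y$ being interlaced by $x$.

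Because Theorem~\ref{t5} has already done the combinatorial work, there is no real obstacle in this corollary: the forward implication is a matter of unwinding the definition of ``bordered,'' and the reverse implication is a direct application of Theorem~\ref{t5} combined with the observation that a $y$-bordered word always contains $y$ (in fact, as a prefix). The only mild point to keep in mind is the relabeling step needed to match the hypothesis $\{0,1,2\}\subseteq\Sigma$ of Theorem~\ref{t5}, which is harmless since the theorem depends only on the letters being distinct.
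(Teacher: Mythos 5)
Your proof is correct and matches the paper's (implicit) argument: the paper states this corollary without proof precisely because, as you observe, the reverse direction is just the definition of $y$-bordered applied to $yay$, and the forward direction is Theorem~\ref{t5} plus the fact that a $y$-bordered word contains $y$ as a prefix. Nothing further is needed.
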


We now turn to case of a binary alphabet.  
This case is more subtle.
For example, consider when $x = 10100$ and
$y = 01001010$.  Then, as can be verified, $x$ is a
subword of the self-overlaps
$y (010)^{-1} y$ and $ y 0^{-1} y$, as well as
the words $yy$, $y0y$, $y1y$, $y00y$, $y01y$, $y10y$,
$y11y$, $y000y$, $y001y$, $y010y$, $y011y$, $y100y$,
$y101y$.  But $x$ is not a subword of $y110y$.

For a binary alphabet $\Sigma$, a special role is
played by the language 
$$A = 01^+ \, \cup \, 10^+ \, \cup \, 0^+1 \, \cup \, 1^+0 .$$

We also define the following languages. For each integer $k\geq 1$, let $B_{0^k1} :=(1+01+\cdots +0^{k-1}1)^+0^k0^*$ and $B_{10^k} :=0^*0^k(1+10+\cdots + 10^{k-1})^+$. Similarly, define $B_{1^k0}$ and $ B_{01^k}$ by relabeling 0 to 1 and 1 to 0.
\begin{lemma}
    Suppose  $\Sigma=\{0,1\}$ and $x\in A$. Then $y \in B_x$ if and only if $x$ is not a subword of $y$, but $x$ is a subword of all $y$-bordered words.
\label{lfour}
\end{lemma}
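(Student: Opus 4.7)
My plan is to use symmetry to reduce to the case $x = 0^k 1$ and then prove the two implications separately. Reversal $w \mapsto w^R$ sends $0^k 1$ to $10^k$, sends $y$-bordered words to $y^R$-bordered words, and carries $B_{0^k 1}$ to $B_{10^k}$; the bit-flip $0 \leftrightarrow 1$ handles the remaining two forms $1^k 0$ and $01^k$. Thus it suffices to prove the equivalence for $x = 0^k 1$. In this case any $y \in B_{0^k 1}$ can be parsed as $y = 0^{j_1} 1 0^{j_2} 1 \cdots 0^{j_m} 1 0^T$ with $m \ge 1$, $0 \le j_i \le k-1$, and $T \ge k$.

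For the forward direction, I would first note that $x = 0^k 1$ is not a subword of such a $y$: every $1$ in $y$ is preceded by at most $k-1$ consecutive $0$'s, and no $1$ follows the trailing $0^T$. To show every $y$-bordered word $z$ contains $x$, I would invoke Lemma \ref{ls} to write $z = (uv)^{e+1} u$ and $y = (uv)^e u$ with $u \in \Sigma^+$, $v \in \Sigma^*$, $e \ge 0$, so that $z = y \cdot vu$. I would then split on whether $vu$ contains a $1$. If yes, let $p$ be the position of the first $1$ in $vu$; then the $1$ at position $|y| + p$ of $z$ is preceded by the $T$ trailing zeros of $y$ together with the first $p-1$ zeros of $vu$, giving $T + p - 1 \ge k$ zeros before a $1$, so $z$ has $0^k 1$ as a subword. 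If no, then $vu = 0^r$ for some $r \ge 1$ (since $u$ is nonempty); because $uv$ uses the same multiset of letters as $vu$, also $uv = 0^r$, so $z = y \cdot 0^r = 0^r \cdot y$. This forces $y$ to commute with $0^r$, hence $y$ is a power of $0$, contradicting that $y$ contains a $1$.

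For the converse, I would assume $x$ is not a subword of $y$ and that $y$ is interlaced by $x$. The first assumption lets me write $y = 0^{j_1} 1 0^{j_2} 1 \cdots 0^{j_m} 1 0^T$ with $j_i \le k-1$ and $T \ge 0$ (or $y = 0^T$ if $m = 0$). If $m = 0$, then $z = 0^{T+1}$ is $y$-bordered and contains no $1$, contradicting interlacement; so $m \ge 1$. To prove $T \ge k$, suppose for contradiction $T \le k-1$ and consider $z = y \cdot 1 \cdot y$, which is $y$-bordered. The inserted $1$ separates the trailing $0^T$ of the first $y$ from the leading $0^{j_1}$ of the second, so every maximal run of $0$'s in $z$ has length among $j_1, \ldots, j_m, T$, each at most $k-1$. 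Hence $z$ has no occurrence of $0^k 1$, contradicting interlacement; so $T \ge k$, and $y \in B_{0^k 1}$.

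The main obstacle will be the forward direction's handling of overlapping borders. Lemma \ref{ls} is the key tool: it packages both disjoint and overlapping $y$-bordered words into the single form $z = y \cdot vu$. The commutation argument in the sub-case $vu = 0^r$ is precisely where the hypothesis $y \in B_{0^k 1}$—namely that $y$ is not a power of $0$—is used. The backward direction and the symmetry reduction should be comparatively routine.
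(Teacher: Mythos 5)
Your proof is correct and follows essentially the same route as the paper's: both directions rest on Lemma~\ref{ls} for the forward implication and on the witnesses $yy$ (or $0^{T+1}$) and $y1y$ for the converse. Your forward direction is slightly tidier---writing $z = y\cdot vu$ and splitting on whether $vu$ contains a $1$ avoids the paper's separate treatment of $e\le 1$ versus $e\ge 2$---but the underlying mechanism (the trailing $0^k$ of $y$ meeting the first $1$ of the appended conjugate $vu$) is the same.
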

\begin{proof}
    We consider the case where $x=0^k1$ and note that the case where $x=1^k0$ is given by relabeling 0 to 1 and 1 to 0, and the other two cases are given by a symmetric argument.
    
    $\Longrightarrow$: Suppose that $y\in B_x=B_{0^k1}=(1+01+\cdots +0^{k-1}1)^+0^k0^*$ and $z$ is a $y$-bordered word. By the definition of $B_x$, observe that $x$ is not a subword of $y$. By Lemma \ref{ls} there exist $u\in\Sigma^+$, and $v\in\Sigma^*$,  and a natural number $e\geq 0 $ such that $y=(uv)^eu$ and $z=(uv)^{e+1}u$. 
    
    We first show that $e\leq 1$. If we assume the contrary, then $y=(uv)^{e-2}uvuvu$. We know that $vu$ has the suffix $10^{k+i}$ for some $i\geq 0$. But since there is at least one $1$ in $vu$ we have that $0^k1$ is a subword of $vuvu$, giving a contradiction. 
    
    If $e=0$ then $z=uvu=yvy$ for some $v\in\Sigma^*$. Since $vy$ has at least one $1$ and $y$ has a suffix of $0^k$, we get that $x$ is a subword of $yvy=z$. If $e=1$ then $y=uvu$ such that $vu$ has the suffix $0^k$ and there is at least one $1$ in $vu$. Then $z=(uv)^2u=uvuvu$ has $0^k1$ as a subword.
    
    $\Longleftarrow$: Assume, to get a contradiction, that there is some $y\in\Sigma^*\setminus B_x=\Sigma^*\setminus B_{0^k1}$ such that $x$ is a subword of all $y$-bordered words and $x$ is not a subword of $y$. Then $y$ satisfies at least one of the following cases, and we will get a contradiction in each of these. 
    
    {\bf Case (i):} $y=0^i$ for some $i\geq 0$. Clearly, $x$ is not a subword of the $y$-bordered word $yy$.
    
    {\bf Case (ii):} $y$ has $x=0^k1$ as a subword, giving an immediate contradiction.
    
    {\bf Case (iii):} The suffix of $y$ is $10^i$ for some $0\leq i<k$. Then consider the $y$-bordered word $z=y1y$. If $x$ is a subword of $z$ but $x$ is not a subword of $y$, then $x$ must straddle the $y$---$y$ boundary in $z$. So the $1$ in $x=0^k1$ must align with the $1$ between the $y$'s in $z=y1y$. But the suffix of $y$ is $10^i$ for $i<k$. So $x$  cannot be a subword of $y1y$. 
\end{proof}

However, for $x \not\in A$, it turns out that
if $x$ is not a subword of $y$, then
there is some word $t$ of length $3$ such that
$x$ is not a subword of $yty$.
To prove this we first give two preliminary lemmas.

\begin{lemma}\label{lem:difference1}
    Suppose $\Sigma=\{0,1\}$, and let $x,y\in\Sigma^*$ with $|x|=n$ and $|y|=m$. Suppose $x$ is not a subword of $y$, but $x$ is a subword of $yty$ for all $t\in\Sigma^*$ such that $|t|=3$ and $x\notin A$.  Then for every integer $k$ satisfying $\max\{1,m-n+2\} \leq k \leq \min\{2m+3-n,m+2\}$ and for all pairs of words $t_1,t_2\in\Sigma^*$ with $|t_1|=|t_2|=3$, we have either $x\neq(yt_1y)[k..k+n-1]$ or $x\neq(yt_2y)[k+1..k+n]$, or both.
    \label{five}
\end{lemma}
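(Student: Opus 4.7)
The plan is to argue by contradiction. Assume that for some $k$ in the stated range there exist $t_1, t_2 \in \Sigma^3$ with $x = (yt_1y)[k..k+n-1]$ and $x = (yt_2y)[k+1..k+n]$. Equating the two copies of $x$ position by position gives
\[
(yt_1y)[i] \;=\; (yt_2y)[i+1] \qquad \text{for every } i \in [k,\,k+n-1].
\]

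The key observation is that $yt_1y$ and $yt_2y$ agree on positions $[1,m]$ and $[m+4,2m+3]$ and differ only on the three middle positions. I partition the range $[k,k+n-1]$ according to which of the three regions (first $y$, middle $t$, second $y$) the indices $i$ and $i+1$ fall in, and read off the induced equation. The subrange $i\le m-1$ gives $y[i]=y[i+1]$, forcing $y[k..m]$ to be a run of a single letter $\alpha:=y[m]$; the subrange $i\ge m+4$ analogously forces $y[1..k+n-m-3]$ to be a run of a single letter $\beta:=y[1]$; and the (up to four) boundary equations at $i \in \{m,m+1,m+2,m+3\}$ give $t_2[1]=\alpha$, $t_1[1]=t_2[2]$, $t_1[2]=t_2[3]$, and $t_1[3]=\beta$. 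Combining these constraints forces $x$ into the generic shape $\alpha^p c_1 c_2 \beta^{n-p-2}$, where $p=\max(0,m-k+1)$ and $c_1,c_2 \in \Sigma$; in the boundary subcases $k \in \{m-n+2, \ldots, m+2\}$ (where the match barely reaches $t$ on one side or the other) the obvious degeneracies occur.

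I then finish with a finite case split on $k$ and on the letters $(\alpha,\beta,c_1,c_2)$, writing $\bar a$ for the complement of $a$ in $\Sigma=\{0,1\}$. In each subcase the derived form of $x$ is shown to violate at least one hypothesis. If $x$ ends up of the shape $\alpha^{n-1}\bar\alpha$, $\bar\alpha\alpha^{n-1}$, or a mirror variant, then $x\in A = 01^+\cup 10^+\cup 0^+1\cup 1^+0$, contradicting the third hypothesis. If $x$ ends up monochromatic, say $x=\alpha^n$, then $x$ is not a subword of $y$ by the first hypothesis, and choosing $t^* = \bar\alpha\bar\alpha\bar\alpha$ shows that $x$ is not a subword of $yt^*y$ either (the three $\bar\alpha$'s block any occurrence that would straddle $t^*$), contradicting the second hypothesis. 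In the remaining generic subcases one similarly exhibits a concrete $t^* \in \Sigma^3$ for which $x$ is not a subword of $yt^*y$: typically $t^* = \bar\beta\bar\beta\bar\beta$ blocks the only possible straddling occurrence of $x$, since any such occurrence would have to align the few non-$\beta$ letters of $x$ with the central block of three $\bar\beta$'s.

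The main obstacle is the bookkeeping in this final case analysis. While the binary alphabet keeps the number of letter-choice subcases finite, the boundary values of $k$ must be handled individually because the middle portion of $x$ then contains fewer than two free characters and the generic shape degenerates; and the overlap regime where the forced constant prefix and suffix of $y$ collide (forcing $y$ itself to be a single-letter power) requires separate bookkeeping before the three hypotheses can be invoked.
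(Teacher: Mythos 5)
Your reduction to a structural form for $x$ is correct and is essentially the paper's argument: equating $(yt_1y)[i]$ with $(yt_2y)[i+1]$ over the match range forces $y[k..m]$ and an initial segment of $y$ to be constant runs and lets you read off the two middle letters from $t_1$, giving $x=\alpha^{p}c_1c_2\beta^{n-p-2}$. This is exactly the paper's conclusion $x=x_1^{i}a_1b_1x_n^{j}$ (the paper reaches it by applying Lyndon--Sch\"utzenberger to $x_1v=va_2$ and $c_1w=wx_n$; your direct index comparison is equivalent), and your treatment of the boundary values $k=m-n+2$ and $k=m+2$ matches the paper's degenerate cases.

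The gap is in how you propose to finish. You assert that in each remaining subcase one can exhibit a \emph{single} $t^*\in\Sigma^3$ for which $x$ is not a subword of $yt^*y$, ``typically'' $t^*=\overline{\beta}\,\overline{\beta}\,\overline{\beta}$. That mechanism provably cannot close several of the required subcases. Take $x=0011$ (shape $\alpha c_1c_2\beta$ with $\alpha=c_1=0$, $c_2=\beta=1$; not in $A$, not monochromatic). For \emph{every} one of the eight words $t^*$ of length $3$ there is a $y$ not containing $0011$ such that $0011$ is a subword of $yt^*y$: for example $y=11$ for $t^*\in\{000,100\}$, $y=00$ for $t^*\in\{111,110\}$, $y=1$ for $t^*=001$, $y=0$ for $t^*=011$, $y=011$ for $t^*=010$, and $y=001$ for $t^*=101$. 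So no universally blocking $t^*$ exists, and the same phenomenon occurs for the families $0^{i}011^{j}$, $0^{i}101^{j}$, $0^{i}110^{j}$, $0^{i}111^{j}$ in various parameter ranges. What actually closes these cases (and what the paper's Table~1 does) is a \emph{two-witness} argument: an occurrence of $x$ in $yt'y$ for one choice $t'$ forces $y$ to have one specific suffix (or prefix), an occurrence in $yt''y$ for a second choice forces an incompatible one, and the two constraints contradict each other. For instance, for $x=0^{i+1}1^{j+1}$ with $i,j\ge 1$, a straddling occurrence in $y101y$ forces $y$ to end in $0^{i+1}1^{j}$ (so $y[m]=1$), while one in $y011y$ forces $y$ to end in $0^{i}$ (so $y[m]=0$). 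Your proposal never identifies this mechanism, so the ``finite case split'' you defer to bookkeeping cannot be completed as described.
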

\begin{proof}
    Assume, to get a contradiction, that there exist $x,y\in\Sigma^*$ such that $x$ is not a subword of $y$ and $x\notin A$ and that there exist $t_1,t_2\in\Sigma^*$ with $|t_1|=|t_2|=3$ and an integer $k$ satisfying $\max\{1,m-n+2\}\leq k\leq \min\{2m+3-n,m+2\}$ such that $(yt_1y)[k..k+n-1]=(yt_2y)[k+1..k+n]=x$, and furthermore $x$ is a subword of $yty$ for all $t\in\Sigma^*$ with $|t|=3$. Let $t_1=a_1b_1c_1$ and $t_2=a_2b_2c_2$ and $x=x_1x_2\cdots x_n$. Before proceeding, first observe that $n\geq 3$ since for all $x$ with $|x|\leq 2$ we have that either $x\in A$ or $x$ is not a subword of one of $y000y$ and $y111y$.
\bigskip

\noindent{\bf Case (i):} $\max\{1,m-n+3\}\leq k\leq \min\{2m+3-n,m+1\}$.\\
    If $\max\{1,m-n+4\}\leq k\leq \min\{2m+3-n,m\}$ then $n\geq 4$ and we can write $x=x_1va_1b_1c_1w=va_2b_2c_2wx_n$ for some $v,w\in\Sigma^*$ where $x_1v=va_2$ and $c_1w=wx_n$ and $a_1b_1=b_2c_2$. Then, by the first theorem of Lyndon-Sch\"{u}tzenberger, we have that $v=x_1^i$ and $w=x_n^j$ for integers $i,j\geq 0$. Thus $x$ can be re-written as $x=x_1^ia_1b_1x_n^j$ for $x_1,a_1,b_1,x_n\in\Sigma$ and $i,j\geq 1$.
    
    If $k=m-n+3$ then, where $n\geq 3$, we can write $x=x_1va_1b_1=va_2b_2c_2$ for $v\in\Sigma^*$ and after applying the first theorem of Lyndon-Sch\"{u}tzenberger we get $x=x_1^ia_1b_1$ where $i\geq 1$.
    
    Similarly if $k=m+1$ then we can write $x=a_1b_1c_1w=b_2c_2wx_n$ and applying the first theorem of Lyndon-Sch\"{u}tzenberger gives $x=a_1b_1x_n^j$ for $j\geq 1$.
    
    So we have that $x=x_1^ia_1b_1x_n^j$ for $a_1,b_1,x_1,x_n\in\Sigma$ and $i,j\geq 0$ and either $i\geq 1$ or $j\geq 1$. We will proceed by getting a contradiction for each possible assignment of $a_1,b_1,x_1,x_n$ to symbols in $\Sigma$ for all valid $i,j$. Table \ref{tab:contradictions1} gives contradictions for all possible assignments where $x_1=0$. Note that the remaining cases can be ruled out by relabeling $0$ to $1$ and $1$ to $0$.
    
\bigskip
    
\noindent    {\bf Case (iii):} $k=m-n+2\geq 1$.\\
    We can write $x=x_1wa_1=wa_2b_2$ for some $w\in\Sigma^+$, where $x_1w=wa_2$. So by the first theorem of Lyndon-Sch\"{u}tzenberger, we get $w=x_1^i$ for some integer $i\geq 1$ and thus $x=x_1^{i+1}x_n$ for $x_1,x_n\in\Sigma$. If $x_1=x_n$, then $x_1^{i+1}x_n=x_1^{i+2}$. But, since $x$ is not a subword of $y$ we cannot have that $x_1^{i+2}$ is a subword of both $y111y$ and $y000y$, giving a contradiction. If instead we have $x_1\neq x_n$, then $x_1^{i+1}x_n\in A$, an immediate contradiction.
    
\bigskip
     
 \noindent   {\bf Case (iv):} $k=m+2\leq 2m+3-n$.\\
    We can write $x=b_1c_1w=c_2wx_n$ and similar to Case (iii), we get $x=x_1x_n^{i+1}$ for $x_1,x_n\in\Sigma$ and $i\geq 1$. By the same argument as in Case (iii), we get a contradiction if $x_1=x_n$ and if $x_1\neq x_n$.
   
    \begin{table}[H]
    	\caption{Contradictions for each $a_1,b_1,x_n\in\Sigma$ and $x_1=0$, where in each row $x=x_1^ia_1b_1x_n^j$ and $i,j\geq 0$ and either $i\geq 1$ or $j\geq 1$. The contradictions rely on the assumption that $x$ is not a subword of $y$.}\label{tab:contradictions1}
        \centering
        \renewcommand{\arraystretch}{1.2}
        \begin{tabular}{c|c|c|c|l}
            $x_1$ & $a_1$ & $b_1$ & $x_n$ & Contradiction \\
            \hline
            $0$ & 0 & 0 & $0$ & For all $i,j\geq 0$ we have that $x$ is not a subword of $y111y$.\\
            \hline 
            $0$ & 0 & 0 & $1$ & For all $i\geq 0$:\\[-3pt]
            &&&& If $j=0$, then $x$ is not a subword of $y111y$;\\[-3pt]
            &&&& If $j=1$, then $x=0^{i+2}1\in A$;\\[-3pt]
            &&&& If $j>1$, then if $x$ is a subword of $y101y$, then $y$ has $0^{i+2}1^{j-1}$ as a \\[-3pt]
            &&&& suffix. But if $x$ is a subword of $y011y$, then $y$ has $0^{i+1}$ as a suffix.\\
            \hline 
            $0$ & 0 & 1 & $0$ & For all $i\geq 0$:\\[-3pt]
            &&&& If $j=0$, then $x=0^{i+1}1\in A$;\\[-3pt]
            &&&& If $j>0$, then $x$ is not a subword of $y111y$.\\
            \hline 
            $0$ & 0 & 1 & $1$ & If $i=0$ or $j=0$, then $x\in A$.\\[-3pt]
            &&&& If $i>0$ and $j>0$, then if $x$ is a subword of $y101y$, then $y$ has $0^{i+1}1^j$ \\[-3pt]
            &&&& as a suffix. But if $x$ is a subword of $y011y$, then $y$ has $0^i$ as a suffix.\\
            \hline
            $0$ & 1 & 0 & $0$ & If $i=0$, then $x=10^{j+1}\in A$.\\[-3pt]
            &&&& If $i>0$, then $x$ is not a subword of $y111y$.\\
            \hline
            $0$ & 1 & 0 & $1$ & If $i=0$ and $j>0$, then $x$ is not a subword of $y000y$.\\[-3pt]
            &&&& If $i>0$ and $j=0$, then $x$ is not a subword of $y111y$.\\[-3pt]
            &&&& If $i>0$ and $j=1$, then if $x$ is a subword of $y011y$, then $y$ has $0^{i}1$\\[-3pt]
            &&&& as a suffix. But if $x$ is a subword of $y111y$, then $y$ has $0^i10$ as a suffix.\\[-3pt]
            &&&& If $i=1$ and $j>0$, then if $x$ is a subword of $y001y$, then $y$ has $01^j$\\[-3pt]
            &&&& as a prefix. But if $x$ is a subword of $y000y$, then $y$ has $101^j$ as a prefix.\\[-3pt]
            &&&& If $i>1$ and $j>1$, then if $x$ is a subword of $y011y$, then $y$ has $0^i1$\\[-3pt]
            &&&& as a suffix. But if $x$ is a subword of $y111y$, then $y$ has $0^i101^\ell$ as a suffix\\[-3pt]
            &&&& for some $\ell <j$. Since $i>1$, this is a contradiction.\\
            \hline
            $0$ & 1 & 1 & $0$ & If $i=0$ and $j=1$, then $x=110\in A$.\\[-3pt]
            &&&& If $i=1$ and $j=0$, then $x=011\in A$.\\[-3pt]
            &&&& If $i>1$ and $j=0$, then if $x$ is a subword of $y101y$, then $y$ has $0^i1$\\[-3pt]
            &&&& as a suffix. But if $x$ is a subword of $y011y$, then $y$ has $0^{i-1}$ as a suffix.\\[-3pt]
            &&&& If i=0 and $j>1$, then if $x$ is a subword of $y101y$, then $y$ has $10^j$\\[-3pt]
            &&&& as a prefix. But if $x$ is a subword of $y110y$, then $y$ has $0^{j-1}$ as a prefix.\\[-3pt]
            &&&& If $i>0$ and $j>0$, then $x$ is not a subword of $y111y$.\\
            \hline
            $0$ & 1 & 1 & $1$ & For all $j\geq 0:$\\[-3pt]
            &&&& If $i=0$, then $x$ is not a subword of $y000y$;\\[-3pt]
            &&&& If $i=1$, then $x=01^{j+2}\in A$;\\[-3pt]
            &&&& If $i>1$, then if $x$ is a subword of $y011y$, then $y$ has $0^{i-1}$ as a suffix.\\[-3pt]
            &&&& But if $x$ is a subword of $y101y$, then $y$ has $0^i1^{j+1}$ as a suffix.
        \end{tabular}
        
    \end{table}
\end{proof}

\begin{lemma}\label{lem:difference2}
  Suppose $\Sigma=\{0,1\}$, and let $x,y\in\Sigma^*$ with $|x|=n$ and $|y|=m$. If $x$ is a subword of $yty$ for all $t\in\Sigma^*$ such that $|t|=3$ and $x\notin A$, and $x$ is not a subword of $y$, then for all pairs of words $t_1,t_2\in\Sigma^*$ with $|t_1|=|t_2|=3$ we have either $x\neq (yt_1y)[m+1..m+n]$, or $x\neq(yt_2y)[m+3..m+2+n]$, or both.
  \label{six}
\end{lemma}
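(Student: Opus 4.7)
The plan is to argue by contradiction: assume there exist words $t_1=a_1b_1c_1$ and $t_2=a_2b_2c_2$ in $\Sigma^3$ with $(yt_1y)[m+1..m+n]=x$ and $(yt_2y)[m+3..m+n+2]=x$. Reading off letters gives $x[1..3]=t_1$ and $x[4..n]=y[1..n-3]$ from the first equality, and $x[1]=c_2$ together with $x[2..n]=y[1..n-1]$ from the second. Combining the two expressions for $x[4..n]$ forces $y[1..n-3]=y[3..n-1]$, so $y[1..n-1]$ has period $2$: it is either the constant word $\alpha^{n-1}$ (when $y[1]=y[2]=\alpha$) or the alternating word $\alpha\beta\alpha\beta\cdots$ (when $y[1]=\alpha\neq\beta=y[2]$), and in either case $x=x_1\cdot y[1..n-1]$ for some letter $x_1\in\Sigma$. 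Small values of $n$ are ruled out by the same reasoning as at the start of the proof of Lemma~\ref{lem:difference1}, so I may assume $n\geq 3$.

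I next split into four cases according to the form of $y[1..n-1]$ and the value of $x_1$. In the constant case with $x_1=\alpha$, we have $x=\alpha^n$; choose $t=\beta\beta\beta$. The middle block of $yty$ contains no $\alpha$, so every occurrence of $x=\alpha^n$ in $yty$ must lie entirely inside one copy of $y$, contradicting the hypothesis that $x$ is not a subword of $y$, and therefore contradicting the hypothesis that $x$ is a subword of $yty$ for every $t\in\Sigma^3$. In the constant case with $x_1=\beta$, we have $x=\beta\alpha^{n-1}$, which lies in $A$, contradicting $x\notin A$.

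The harder two sub-cases occur when $y[1..n-1]$ is alternating. If $x_1=\beta=y[2]$, then $x=\beta\alpha\beta\alpha\cdots$ is alternating throughout; if $x_1=\alpha=y[1]$, then $x=\alpha\alpha\beta\alpha\beta\cdots$ begins with $\alpha\alpha$ and alternates thereafter (for $n=3$ this latter $x$ equals $\alpha\alpha\beta\in A$, so I may further restrict to $n\geq 4$). The strategy in each sub-case is the same: first take $t=\alpha^3$ in the alternating sub-case, or $t=\beta^3$ in the $\alpha\alpha$-prefix sub-case, and analyze how $x$ can sit inside $yty$, bounding the number of consecutive characters of $x$ that can fall inside the three-letter middle block by how many times the middle letter can repeat consecutively in $x$. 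For one parity of $n$, this analysis rules out every placement by a letter-matching test at $x[1]$ or $x[n]$, giving an immediate contradiction. For the other parity, the analysis leaves a single allowed placement, which forces $y$ to end with $x[1..n-1]$. I then pick a second, more delicate $t$ (namely $t=\alpha\beta\beta$ in the $\alpha\alpha$-prefix sub-case, or $t=\beta\alpha\alpha$ in the alternating sub-case), redo the straddling analysis, and again obtain a single allowed placement, this time forcing $y$ to end with $x[1..n-2]$. Comparing the two suffix conditions on $y$ yields $x[1..n-2]=x[2..n-1]$, so $x[1..n-1]$ is constant, contradicting the explicitly non-constant form of $x$ derived above.

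The main obstacle is the straddling bookkeeping in the two alternating sub-cases: one must enumerate the options for how many consecutive characters of $x$ can fall inside the three-letter middle block of $yty$, rule most of them out using the absence of $\alpha\alpha$ (or of $\beta\beta$) in $x$, and then line up the parity of $n$ with the values of $x[1]$, $x[n-1]$, and $x[n]$ for each candidate placement. Once both suffix conditions on $y$ have been extracted, the final contradiction is immediate.
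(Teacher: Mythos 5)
Your proposal is correct and follows essentially the same strategy as the paper's proof: argue by contradiction, use the two overlapping occurrences (offset by $2$) to show that $x$ is a single letter followed by a word of period $2$, and then dispatch a finite case analysis (constant vs.\ alternating, split by first letter and parity) by exhibiting specific middle words $t$ that force incompatible suffix conditions on $y$. The paper organizes the same cases via a Lyndon--Sch\"utzenberger factorization $x = x_1(uv)^{i+1}u$ and a table, and its witnesses ($y111y$, $y011y$, etc.) coincide with yours up to relabeling, so the two arguments are the same in substance.
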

\begin{proof}
  Assume, to get a contradiction, that there exist $x,y\in\Sigma^*$ such that $x$ is not a subword of $y$ and $x\notin A$ and that there exist $t_1,t_2\in\Sigma^*$ with $|t_1|=|t_2|=3$ such that $(yt_1y)[m+1..m+n]=(yt_2y)[m+3..m+2+n]=x$, and furthermore $x$ is a subword of $yty$ for all $t\in\Sigma^*$ with $|t|=3$. Let $t_1=a_1b_1c_1$ and $t_2=a_2b_2c_2$ and $x=x_1x_2\cdots x_n$, and assume $|y|=m$. 
  
  We can write $x=a_1b_1c_1w=c_2wx_{n-1}x_n$. So $b_1c_1w=wx_{n-1}x_n$ and by the first theorem of Lyndon-Sch\"{u}tzenberger there exist $u\in\Sigma^+$ and $v\in\Sigma^*$ and an integer $i\geq 0$ such that $b_1c_1=uv$ and $x_{n-1}x_n=vu$ and $w=(uv)^iu=u(vu)^i$. This gives $x=x_1wx_{n-1}x_n=x_1(uv)^iuvu=x_1(uv)^{i+1}u$. We now consider each possible $u\in\Sigma^+$ and $v\in\Sigma^*$, seeking a contradiction in each case. The contradictions are summarized in Table \ref{tab:contradictions2}. Note again that the contradictions are given for all cases where $x_1=0$; the remaining cases can be obtained by relabeling 0 to 1 and 1 to 0.
  
  \begin{table}[H]
  	\caption{Contradictions for each valid $u\in\Sigma^+$, $v\in\Sigma^*$, and $x_1=0$, where in each row $x=x_1(uv)^{i+1}u$ for $i\geq 0$. The contradictions rely on the assumption that $x$ is not a subword of $y$.}\label{tab:contradictions2}
    \centering
    \renewcommand{\arraystretch}{1.2}
    \begin{tabular}{c|c|c|l}
      $x_1$ & $u$ & $v$ & Contradiction\\
      \hline
      0 & 00 & $\epsilon$ & $x$ is not a subword of $y111y$.\\
      \hline
      0 & 0 & 0 & $x$ is not a subword of $y111y$.\\
      \hline 
      0 & 01 & $\epsilon$ & If $x$ is a subword of $y111y$, then $y$ has $0(01)^{i+1}0$ as a suffix.\\[-3pt]
      &&& But if $x$ is a subword of $y011y$, then $y$ has $0(01)^{i+1}$ as a suffix.\\
      \hline
      0 & 0 & 1 & $x$ is not a subword of $y111y$.\\
      \hline
      0 & 10 & $\epsilon$ & $x$ is not a subword of $y111y$.\\
      \hline
      0 & 1 & 0 & If $x$ is a subword of $y111y$ then $y$ has $0(10)^{i+1}$ as a suffix.\\[-3pt]
      &&& But if $x$ is a subword of $y011y$ then $y$ has $(01)^{i+1}$ as a suffix.\\
      \hline
      0 & 11 & $\epsilon$ & $x\in A$.\\
      \hline
      0 & 1 & 1 & $x\in A$.\\
    \end{tabular}
  \end{table}
\end{proof}

\begin{theorem}\label{thm:ytyNotInA}
    Suppose $\Sigma = \{ 0,1 \}$ and let $x, y \in \Sigma^*$.   If $x$ is a subword of $yty$ for all $t\in\Sigma^*$ such that $|t|=3$ and $x\notin A$, then $x$ is a subword of $y$.
\end{theorem}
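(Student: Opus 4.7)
My plan is to prove the contrapositive: suppose $x\notin A$ and $x$ is not a subword of $y$, but $x$ is a subword of $yty$ for every $t\in\Sigma^*$ with $|t|=3$, and derive a contradiction. Under these assumptions both Lemma~\ref{lem:difference1} and Lemma~\ref{lem:difference2} apply. For each of the $8$ triples $t\in\{0,1\}^3$, fix a starting position $k_t$ with $(yty)[k_t..k_t+n-1]=x$. Because $x$ is not a subword of $y$, each window $[k_t,k_t+n-1]$ must intersect the inserted three-symbol block at positions $m+1,m+2,m+3$, so $k_t\in[\max\{1,m-n+2\},\min\{2m+4-n,m+3\}]$. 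Let $S=\{k_t:t\in\{0,1\}^3\}$. Then Lemma~\ref{lem:difference1} implies $S$ contains no two consecutive integers in $[\max\{1,m-n+2\},\min\{2m+3-n,m+2\}]$, and Lemma~\ref{lem:difference2} implies $S$ does not contain both $m+1$ and $m+3$. So $S$ is an independent set in the graph that is a path on the valid starting positions with one extra edge $\{m+1,m+3\}$, making $\{m+1,m+2,m+3\}$ a triangle.

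Next, I would classify each admissible $k$ by the set $T(k)\subseteq\{0,1\}^3$ of triples realized when $x$ starts at position $k$ of $yty$. If the overlap of $[k,k+n-1]$ with $\{m+1,m+2,m+3\}$ has size $j$, then $|T(k)|=2^{3-j}$, and the $j$ determined bits of $t$ are read off from $x$ at the matching offsets. The extreme positions $m+2-n$ and $m+3$ each have $|T|=4$ and contribute $\{(x[n],\ast,\ast)\}$ and $\{(\ast,\ast,x[1])\}$ respectively, whose union covers exactly $6$ of the $8$ triples, leaving the uncovered pair $\{(1-x[n],\ast,1-x[1]):\ast\in\{0,1\}\}$. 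Closing this gap requires additional positions in $S$, but every remaining admissible position covers at most $2$ triples, and each generic interior position forces a specific length-$3$ subword of $x$ to equal the single triple it realizes.

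Finally, I would work through the cases determined by which of the boundary positions $\{m+2-n,\,m+3-n,\,m+2,\,m+3\}$ lie in $S$ and by the values of $x[1],x[2],x[n-1],x[n]$. In each case the resulting structure on $x$ either forces two consecutive interior positions into $S$ (violating Lemma~\ref{lem:difference1}), or forces $x$ into the pattern $A=01^+\cup 10^+\cup 0^+1\cup 1^+0$, or exhibits $x$ as a subword of $y$, yielding the required contradiction. The main obstacle is the breadth of the case analysis: the extreme regimes ($n\le 2$ and $n$ near $2m+3$) must be handled separately, and one must carefully track how the boundary positions in $S$ interact with admissible interior ones under the independence constraints.
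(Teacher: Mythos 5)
Your setup is sound and is essentially a reorganization of the paper's use of its two preparatory lemmas: every occurrence of $x$ in $yty$ must straddle the inserted block, so the set $S$ of realized starting positions lies in the interval you describe; Lemma~\ref{lem:difference1} forbids consecutive elements of $S$ and Lemma~\ref{lem:difference2} forbids $\{m+1,m+3\}\subseteq S$; and your census of which triples each position can realize is correct. In particular, your observation that the two multiplicity-$4$ positions leave exactly the pair $(\overline{x[n]},\ast,\overline{x[1]})$ uncovered, and that no single position leaves only the middle coordinate of $t$ free, is precisely the paper's step showing that the two triples $\overline{a_4}\,0\,\overline{c_0}$ and $\overline{a_4}\,1\,\overline{c_0}$ must be realized at two \emph{distinct} positions whose windows fully overlap $t$. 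Your separate treatment of $n\le 2$ is also a point in your favor.

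The gap is at the final step, which is where all the real work lies. You claim that in every case the configuration either forces two consecutive positions into $S$, or forces $x\in A$, or exhibits $x$ as a subword of $y$. In the generic configuration --- occurrences at $m+2-n$, at $m+3$, and at two interior positions $k_1,k_2$ with $|k_1-k_2|\ge 2$ realizing $(\overline{x[n]},0,\overline{x[1]})$ and $(\overline{x[n]},1,\overline{x[1]})$ --- none of these three outcomes follows from anything you have established. Nothing ever ``forces'' an additional position into $S$ ($S$ is simply the set of occurrences that happen to exist), and two well-separated interior occurrences impose long-period constraints on $x$ that do not put $x$ into $A$. What closes this case in the paper is an index-chasing argument, the binary analogue of Theorem~\ref{t5}: using the alignment equations $x[\ell]=y[\ell+m+3-j]$ and $x[\ell]=y'[\ell-j]$ coming from three carefully chosen occurrences (the rightmost one together with the two realizing triples that differ only in one coordinate), one chases indices to obtain a chain of the form $\overline{c_0}=x[k]=y'[k-j]=x[i+k-j]=y[i+m+3-j]=x[i]=c_0$ (or the analogous chain ending in $\overline{b_0}=\cdots=b_0$), a direct symbol clash. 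The separation guarantees from Lemmas~\ref{lem:difference1} and~\ref{lem:difference2} are needed exactly to validate the index ranges in that chase. Your proposal contains no substitute for this mechanism, and the case split you propose (on which boundary positions lie in $S$ and on $x[1],x[2],x[n-1],x[n]$) cannot detect the contradiction, since it depends on the relative offsets $k_1,k_2$ and on the interior of $x$, not just on four boundary symbols. As written, the plan stalls precisely where the proof must actually be made.
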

\begin{proof}
    Define the function $f:\Sigma^*\times \Sigma^*\rightarrow \mathbb{N}$ over pairs of words $x,w\in\Sigma^*$ such that $x$ is a subword of $w$ as $f(x,w)=\min\{i\in\mathbb{N}:w[i..i+|x|-1]=x\}$. Also, define the bitwise complements of elements of $\Sigma$ as $\overline{0}=1$ and $\overline{1}=0$.
    
    Assume, to get a contradiction, that $x$ is not a subword of $y$ and also assume that $|y|=m$ and $|x|=n$. If $x$ is a subword of $yty$ for each $t\in\Sigma^*$ with $|t|=3$, then for each such $t$  we have $f(x,yty)\leq m+3$ and $f(x,yty)+n-1\geq m+1$. Since the position of $x$ in $yty$ cannot be the same for all valid $t$, let $t_0=a_0b_0c_0$ be the choice of $t$ for which $f(x,yty)$ is greatest across all valid $t$ and let $t_4=a_4b_4c_4\neq t_0$ be the choice of $t$ for which $f(x,yty)$ is smallest across all valid $t$. We now consider two cases depending on the position of $x$ in $yt_0y$.

\bigskip
    
\noindent    {\bf Case (i):} $f(x,yt_0y)=m+3$.
Consider $t_1=\overline{a_4} 0 \overline{c_0}$ and $t_2=\overline{a_4} 1 \overline{c_0}$. Since $t_1$ and $t_2$ differ from $t_4$ in the first index of $t_4$, and $t_4$ gives the leftmost position for $x$ as a subword of $yty$ over all valid choices of $t$, we know $f(x,yt_1y)\neq f(x,yt_4y)$ and $f(x,yt_2y)\neq f(x,yt_4y)$. Similarly we have $f(x,yt_1y)\neq f(x,yt_0y)$ and $f(x,yt_2y)\neq f(x,yt_0y)$. Applying Lemma \ref{lem:difference1} to the pairs $t_4,t_1$ and $t_4,t_2$ and Lemmas \ref{lem:difference1} and \ref{lem:difference2} to the pairs $t_1,t_0$ and $t_2,t_0$ we have that $f(x,yt_1y)+n-1\geq m+3$ and $f(x,yt_2y)+n-1\geq m+3$ and that $f(x,yt_1y)\leq m+1$ and $f(x,yt_2y)\leq m+1$. So for $yt_1y$ (resp., $yt_2y$), the position of $x$ is such that it entirely overlaps $t_1$ (resp., $t_2$). But since $t_1\neq t_2$ we know that the positions of $x$ as a subword of $yt_1y$ and $yt_2y$ are distinct, i.e., $f(x, yt_1y)\neq f(x,yt_2y)$. 

So suppose without loss of generality that $f(x,yt_1y)>f(x,yt_2y)$. We now perform an index chasing argument, similar to that of the ternary case, using $t_0,t_1,t_2$ and seeking the contradiction $c_0=(yt_0y)[m+3]=(yt_2y)[m+3]=\overline{c_0}$. We use the same labeling scheme as in the ternary case. So define $i,j,k$ such that $i=m+4-f(x,yt_0y)$ and $j=m+4-f(x,yt_1y)$ and $k=m+4-f(x,yt_2y)$, giving $x_0[i]=t_0[3]=c_0$ and $x_1[j]=t_1[3]=\overline{c_0}$ and $x_2[k]=t_2[3]=\overline{c_0}$. Note that in this case we have $i=1$ by assumption and  $j\geq i+3$ by Lemmas \ref{lem:difference1} and \ref{lem:difference2}. From Figure \ref{fig:binaryCase1} we obtain the following identities.
\begin{align}
    x_1[\ell] &= y[\ell+m+3-j] \text{  for } 1 \leq \ell \leq j-3; \label{b2}\\
    x_2[\ell] &= y[\ell+m+3-k] \text{  for } 1 \leq \ell \leq k-3; \label{b3}\\
    x_0[\ell] &= y'[\ell-i] \text{  for } i+1 \leq \ell \leq n; \label{b4}\\
    x_1[\ell] &= y'[\ell-j] \text{  for } j+1 \leq \ell \leq n. \label{b5}
    \end{align}
    Since $j+1\leq k\leq n$, we can apply \eqref{b5} to get $x_1[k]=y'[k-j]$. Then, since $i\leq j$ and $k\leq n$, we have $i+k\leq n+j$, giving $i+k-j\leq n$. This, together with the inequality $k-j\geq 1$ giving $i+1\leq i+k-j$, means that we can apply \eqref{b4} to get $y'[k-j]=x_0[i+k-j]$. Next, $j-i\geq 3$ gives $i+k-j\leq k-3$, so applying \eqref{b3} gives $x_2[k+i-j]=y[i+m+3-j]$. Finally, since $1\leq i\leq j-3$,  we can apply \eqref{b2} to get $y[i+m+3-j]=x_1[i]$. Together this gives the contradiction
    $$\overline{c_0}=x_2[k]=y'[k-j]=x_0[i+k-j]=y[i+m+3-j]=x_2[i]=c_0.$$
    \begin{figure}
        \centering
        \begin{tikzpicture}[scale=0.8]
        \draw[black] (0,6) rectangle (8,5) node[pos=.5] {$y$};
        \draw[black] (8,6) rectangle (9,5) node[pos=.5] {$a$};
        \draw[black] (9,6) rectangle (10,5) node[pos=.5] {$b$};
        \draw[black] (10,6) rectangle (11,5) node[pos=.5] {$c$};
        \draw[black] (11,6) rectangle (19,5) node[pos=.5] {$y'$};
        
        \draw[black] (10,5) rectangle (17,4);
        \draw[black] (10,5) rectangle (11,4) node[pos=.5] {$c_0$};
        
        \draw[black] (7,4) rectangle (14,3);
        \draw[black] (8,4) rectangle (9,3) node[pos=.5] {$\overline{a_4}$};
        \draw[black] (9,4) rectangle (10,3) node[pos=.5] {$0$};
        \draw[black] (10,4) rectangle (11,3) node[pos=.5] {$\overline{c_0}$};
        
        \draw[black] (5,3) rectangle (12,2);
        \draw[black] (8,3) rectangle (9,2) node[pos=.5] {$\overline{a_4}$};
        \draw[black] (9,3) rectangle (10,2) node[pos=.5] {$1$};
        \draw[black] (10,3) rectangle (11,2) node[pos=.5] {$\overline{c_0}$};
        
        \node at (18, 4.3)   (x0) {$=x_0$};
        \node at (18, 3.3)   (x1) {$=x_1$};
        \node at (18, 2.3)   (x2) {$=x_2$};
        \end{tikzpicture}
        \caption{Positions of $x$ in $yt_0y,yt_1y,yt_2y$ for Case (i).}\label{fig:binaryCase1}
    \end{figure}
    
 \noindent   {\bf Case (ii):} $f(x,yt_0y)\leq m+2$.
 \smallskip\noindent
    Consider $t_1=\overline{a_4 b_0}c_0$ and $t_2=\overline{a_4}b_0\overline{c_0}$ and $t_3=\overline{a_4 b_0 c_0}$. By the same argument as in Case 1 we get that the positions of $x$ as a subword of each of $yt_0y,yt_1y,yt_2y,yt_3y,yt_4y$ are all distinct. Furthermore, by Lemma \ref{lem:difference1} we have that for each pair $t_i,t_j$ with $0\leq i,j\leq 4$ and $i\neq j$ the difference in positions of $x$ as a subword of $yt_iy$ and $yt_jy$ is $|f(x,yt_iy)-f(x,yt_jy)|\geq 2$. We now order these choices of $t$ and relabel $t_1,t_2,t_3$ if necessary such that $f(x,yt_0y)> f(x,yt_3y)> f(x,yt_1y)>f(x,yt_2y)> f(x,yt_4y)$. At this point we again perform an index-chasing argument using $t_0,t_1,t_2$. If we have that $t_2[3]=\overline{c_0}$ then the argument given in Case (i) holds to give a contradiction. If instead we have that $t_2[3]=c_0$, then we know that $t_2[2]=\overline{b_0}$ and we will get the contradiction $b_0=(yt_0y)[m+2]=(yt_2y)[m+2]=\overline{b_0}$.  To do this we define $i,j,k$ such that $i=m+3-f(x,yt_0y)$ and $j=m+3-f(x,yt_1y)$ and $k=m+3-f(x,yt_2y)$, giving $x_0[i]=t_0[2]=b_0$ and $x_1[j]=t_1[2]$ and $x_2[k]=t_2[2]=\overline{b_0}$. Since $f(x,yt_0y)>f(x,yt_3y)>f(x,yt_1y)$, Lemma \ref{lem:difference1} gives $f(x,yt_0y)\geq f(x,yt_3y)+2$ and $f(x,yt_3y)\geq f(x,yt_1y)+2$. So $f(x,yt_0y)\geq f(x,yt_1y) + 4$ and thus $j\geq i+4$. From Figure \ref{fig:binaryCase2} we get the following identities. Note that these identities are centered around $t[2]$ instead of $t[3]$ as in Case 1.
    \begin{align}
    x_1[\ell] &= y[\ell+m+2-j] \text{  for } 1 \leq \ell \leq j-2; \label{b7}\\
    x_2[\ell] &= y[\ell+m+2-k] \text{  for } 1 \leq \ell \leq k-2; \label{b8}\\
    x_0[\ell] &= y'[\ell-i-1] \text{  for } i+2 \leq \ell \leq n; \label{b9}\\
    x_1[\ell] &= y'[\ell-j-1] \text{  for } j+2 \leq \ell \leq n. \label{b10}
    \end{align}
    Since $j+2\leq k\leq n$ we can apply \eqref{b10} to get $x_1[k]=y'[k-j-1]$. Then since $i\leq j$ and $k\leq n$ we have $i+k\leq n+j$, giving $i+k-j-1<i+k-j\leq n$. This, together with $k-j\geq 2$ giving $i+2\leq i+k-j$ by Lemma \ref{lem:difference1}, means that we can apply \eqref{b9} to get $y'[k-j-1]=x_0[i+k-j]$. Next, $j-i\geq 4$ gives $i+k-j\leq k-4<k-2$, so applying \eqref{b8} gives $x_2[k+i-j]=y[i+m+2-j]$. Finally, since $1\leq i\leq j-4<j-2$,  we can apply \eqref{b7} to get $y[i+m+2-j]=x_1[i]$. Together this gives the contradiction
$$\overline{b_0}=x_2[k]=y'[k-j-1]=x_0[i+k-j]=y[i+m+2-j]=x_2[i]=b_0.$$

    \begin{figure}[H]
        \centering
        \begin{tikzpicture}[scale=0.8]
        \draw[black] (0,6) rectangle (8,5) node[pos=.5] {$y$};
        \draw[black] (8,6) rectangle (9,5) node[pos=.5] {$a$};
        \draw[black] (9,6) rectangle (10,5) node[pos=.5] {$b$};
        \draw[black] (10,6) rectangle (11,5) node[pos=.5] {$c$};
        \draw[black] (11,6) rectangle (19,5) node[pos=.5] {$y'$};
        
        \draw[black] (8.5,5) rectangle (16.5,4);
        \draw[black] (9,5) rectangle (10,4) node[pos=.5] {$b_0$};
        \draw[black] (10,5) rectangle (11,4) node[pos=.5] {$c_0$};
        
        \draw[black] (6,4) rectangle (14,3);
        \draw[black] (8,4) rectangle (9,3) node[pos=.5] {$\overline{a_4}$};
        \draw[black] (9,4) rectangle (10,3) node[pos=.5] {\footnotesize $t_1[2]$};
        \draw[black] (10,4) rectangle (11,3) node[pos=.5] {$\overline{c_0}$};
        
        \draw[black] (4,3) rectangle (12,2);
        \draw[black] (8,3) rectangle (9,2) node[pos=.5] {$\overline{a_4}$};
        \draw[black] (9,3) rectangle (10,2) node[pos=.5] {$\overline{b_0}$};
        \draw[black] (10,3) rectangle (11,2) node[pos=.5] {$c_0$};
        
        \node at (18, 4.3)   (x0) {$=x_0$};
        \node at (18, 3.3)   (x1) {$=x_1$};
        \node at (18, 2.3)   (x2) {$=x_2$};
        \end{tikzpicture}
        \caption{Positions of $x$ in $yt_0y,yt_1y,yt_2y$ for Case (ii) where $t_2[3]=c_0$.}\label{fig:binaryCase2}
    \end{figure}
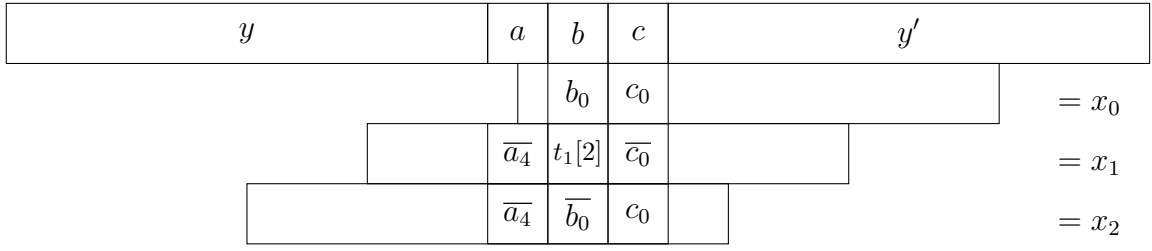
\end{proof}

\begin{corollary}
  Let $\Sigma = \{0,1\}$.  Then $x$ is a subword of every $y$-bordered 
  word if and only if $x$ is a subword of $yty$ for all words $t$ of length 3.
\label{c2}
\end{corollary}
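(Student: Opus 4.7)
The forward direction is immediate: for any $t$ of length $3$, the word $yty$ has $y$ as both a prefix and a suffix and is strictly longer than $y$, so it is $y$-bordered; if $x$ is a subword of every $y$-bordered word, then in particular $x$ is a subword of every such $yty$.

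For the backward direction, suppose $x$ is a subword of $yty$ for every word $t$ with $|t| = 3$. The plan is to split on whether $x \in A$. If $x \notin A$, Theorem~\ref{thm:ytyNotInA} immediately yields that $x$ is a subword of $y$, and hence of every $y$-bordered word, since every $y$-bordered word contains $y$ as a (proper) subword.

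Now assume $x \in A$. If $x$ is already a subword of $y$, we are done. Otherwise, the plan is to establish $y \in B_x$ and then invoke the ``$\Longleftarrow$'' direction of Lemma~\ref{lfour} to conclude that $x$ is a subword of every $y$-bordered word. By the $0 \leftrightarrow 1$ symmetry it suffices to treat the two representative subcases $x = 0^k 1$ and $x = 1 0^k$ for some $k \geq 1$. Since $x$ is not a subword of $y$, the word $y$ already lies in the regular set of words avoiding $x$, and to conclude $y \in B_x$ there are only two residual obligations: (1)~$y$ must contain at least one $1$, and (2)~$y$ must end (respectively, begin) with at least $k$ zeros. Obligation~(1) follows by specializing the hypothesis to $t = 000$: if $y$ contained no $1$, then $y\,000\,y$ would contain no $1$ at all, contradicting that $x$ is a subword of it. For obligation~(2), I specialize to $t = 111$: any occurrence of $0^k 1$ inside $y\,111\,y$ must use $k$ consecutive zeros, and since the central block $111$ contains no zeros, these zeros must lie entirely in one of the two copies of $y$. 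Since $x$ is not a subword of $y$, the trailing $1$ of $0^k 1$ cannot be supplied by $y$, so the zeros must sit at the very end of the first $y$ with the trailing $1$ supplied by the block $111$, forcing $y$ to end with at least $k$ zeros. The case $x = 10^k$ is the left--right mirror.

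The main obstacle is essentially bookkeeping: arguing cleanly that the two probes $t = 000$ and $t = 111$ alone suffice, uniformly across all four symmetry classes of $A$, to pin down membership of $y$ in $B_x$ without having to reach for the more delicate index-chasing machinery of Lemmas~\ref{lem:difference1} and~\ref{lem:difference2}, which is tailored for the case $x \notin A$.
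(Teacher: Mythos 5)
Your proof is correct. The forward direction and the case $x \notin A$ match the paper exactly (both reduce to Theorem~\ref{thm:ytyNotInA}). The genuine divergence is in the case $x \in A$ with $x$ not a subword of $y$: the paper argues directly that every $y$-bordered word contains $x$ --- it uses the same two probes $t=000$ and $t=111$ to deduce that $y = 1^i w 0 1^j$ with $j<i$ (for its representative $x=01^i$) and then reasons about the position of the first $0$ in an arbitrary $y$-bordered word $z$ --- whereas you use those probes to deduce $y \in B_x$ and then invoke Lemma~\ref{lfour}, which has already carried out the bordered-word analysis once and for all. Your route reuses machinery the paper builds but, somewhat oddly, does not use in its own proof of this corollary; it is arguably cleaner, since the delicate step (why the shape of $y$ forces $x$ into every border) is quarantined inside the lemma rather than re-derived. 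Two small points to tidy up. First, the direction of Lemma~\ref{lfour} you need is the one labelled ``$\Longrightarrow$'' in its proof (from $y \in B_x$ to the conclusion that $x$ occurs in every $y$-bordered word), not ``$\Longleftarrow$''. Second, your claim that avoiding $x=0^k1$, containing a $1$, and ending in $0^k$ are together equivalent to membership in $B_{0^k1}=(1+01+\cdots+0^{k-1}1)^+0^k0^*$ is true but deserves one line: write $y = u0^s$ with $s \geq k$ and $u$ nonempty ending in $1$; cutting $u$ after each $1$ expresses it as a product of blocks $0^a1$ with $a \leq k-1$, precisely because $0^k1$ is not a subword of $y$. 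With those two remarks added, the argument is complete.
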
  

\begin{proof}
  If $x$ is a subword of every $y$-bordered word, then clearly $x$ is a subword of $yty$ for all words $t$ of length 3.
  For the other direction there are two cases. 
\medskip

\noindent {\bf Case 1:} $x\notin A$. Then by Theorem \ref{thm:ytyNotInA} we know $x$ is a subword of $y$. So $x$ is also a subword of every $y$-bordered word.

\medskip
  
\noindent{\bf Case 2:} $x\in A$. Then $x$ has the form $01^i$, or $0^i1$, or $10^i$, or $1^i0$ for some $i\geq 1$. We consider the case where $x=01^i$ and note that the case where $x=1^i0$ follows by a symmetric argument and the other cases are given by relabeling 0 to 1 and 1 to 0. If $x$ is a subword of $y$ then the result follows trivially. So suppose that $x=01^i$ is not a subword of $y$, but that $x$ is a subword of $yty$ for all words $t$ of length 3. Then, since $x$ is a subword of $y000y$, we have that $1^i$ is a prefix of $y$. Additionally, since $x$ is a subword of $y111y$, we know that $01^j$ is a suffix of $y$ for some $j$ satisfying $0\leq j<i$. So we have $y=1^iw01^j$ for some $w\in\Sigma^*$. Now consider a $y$-bordered word $z$. Let $k$ be the index of the first 0 in $z$. Since $z$ has $y$ as a prefix and a suffix, and $z\neq y$, we know that $|z|\geq |y|+k$. This is because the $y$-suffix of $z$ must start after the first $0$ in $z$. So we have that there are $i$ consecutive $1$'s in $z$ starting at some index $\ell>k$. Let $k'$ be the largest index less than $\ell$ such that $z[k']=0$. Then $z[k'..k'+i]=01^i$. So $x$ is a subword of $z$. 
\end{proof}

\begin{remark}
The number $3$ is optimal in Corollary~\ref{c2}.
Consider $x = 10100$, $y = 01001010$.   
Then $x$ is a subword of every $y$-bordered
word of length $\leq 2|y| + 2 = 18$, but not
a subword of $yty$ with $t = 110$.
\end{remark}

\section{Finiteness}

We now examine when $L_{x=y}$ is finite. 

\begin{theorem}
Let $x, y \in \Sigma^+$.  Then $L_{x=y}$ is finite if and only if
$|\Sigma|= 1$ and $x \not= y$.
\label{seven}
\end{theorem}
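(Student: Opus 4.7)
The plan is to prove the two directions separately. For $(\Longleftarrow)$, assume $\Sigma = \{a\}$ and, without loss of generality, $x = a^p$ and $y = a^q$ with $p < q$. Since $|a^n|_x = \max(0, n-p+1)$ and $|a^n|_y = \max(0, n-q+1)$, checking the three intervals $n < p$, $p \leq n < q$, and $n \geq q$ yields $L_{x=y} = \{a^n : 0 \leq n < p\}$, a finite set.

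For $(\Longrightarrow)$, I will prove the contrapositive: if either $x = y$ or $|\Sigma| \geq 2$, then $L_{x=y}$ is infinite. The case $x = y$ is immediate since then $L_{x=y} = \Sigma^*$. So suppose $|\Sigma| \geq 2$ and $x \neq y$, and call a letter $c \in \Sigma$ \emph{good} if neither $x$ nor $y$ lies in $c^+$. If some good $c$ exists, then $|c^n|_x = 0 = |c^n|_y$ for every $n \geq 0$, and $\{c^n : n \geq 0\} \subseteq L_{x=y}$ is already an infinite family.

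Otherwise every letter of $\Sigma$ is the unique letter of $x$ or of $y$; since a nonempty word is a power of at most one letter, this forces $|\Sigma| \leq 2$, so $\Sigma = \{a,b\}$ with $\{x,y\} = \{a^p, b^q\}$ for some $p,q \geq 1$. Using the symmetry of $L_{x=y}$ under $x \leftrightarrow y$, I will assume $x = a^p$ and $y = b^q$. For this subcase I will take $w_n = (a^p b^q)^n$ and observe that the maximal $a$-runs and $b$-runs in $w_n$ have lengths exactly $p$ and $q$ respectively; hence each of the $n$ block-copies of $a^p$ contributes one occurrence of $x$ and no other occurrences of $x$ exist, and similarly for $y$. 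Thus $|w_n|_x = n = |w_n|_y$, giving the required infinite subfamily of $L_{x=y}$.

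The only step with any content is the binary subcase $x = a^p$, $y = b^q$, where no single-letter family $\{c^n\}$ can produce infinitely many words with matching counts; the explicit two-letter construction $(a^p b^q)^n$ is needed, but verifying its count is immediate from the block structure, so overall the proof should be short.
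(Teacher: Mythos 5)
Your proof is correct and follows essentially the same route as the paper's: the unary forward direction is the same occurrence count $\max(0,n-p+1)$, and the witnesses for infiniteness are the paper's own ($c^n$ for a letter $c$ occurring in neither word, and $(a^pb^q)^n$ when $x,y$ are powers of distinct letters). Your ``good letter'' observation merely merges three of the paper's four cases into one, which is a slightly cleaner organization but not a different argument.
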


\begin{proof}
There are four cases to consider.

\noindent {\bf Case (i):}  $x = a^i$ and $y = a^j$ for integers $i, j > 0$.
If $|\Sigma| = 1$, then $L_{x=y}$ is finite if and only if $x \not= y$, for
otherwise without loss of generality $i < j$, and for $n \geq j$ the word
$a^n$ contains $n-j+1$ occurrences of $a^j$, but $n-i+1$ occurrences of $a^i$.

Otherwise $|\Sigma| > 1$.  Let $b \in \Sigma$ and $b \not= a$.  Then
for each $z \in b^*$ we have $|z|_x = |z|_y = 0$.  Thus $L_{x=y}$ is infinite.

\noindent {\bf Case (ii):}  $x = a^i$ and $y = b^j$ for two distinct symbols $a, b$
and $i, j > 0$.  Then for each $z$ of the form $(xy)^n$ we have
$|z|_x = |z|_y = n$.   Thus $L_{x=y}$ is infinite.

\noindent {\bf Case (iii):}  $x = a^i$ for some $i > 0$ but $y$ contains two different symbols.  Let $b \in \Sigma$ with $b\not= a$.  Then
for each $z \in b^*$ we have $|z|_x = |z|_y = 0$.  Thus $L_{x=y}$ is infinite.

\noindent {\bf Case (iv):}  $x$ and $y$ both contain two different symbols.
Let $a \in \Sigma^*$.  Then for each $z \in a^*$ we have $|z|_x = |z|_y = 0$.  Thus $L_{x=y}$ is infinite. 
\end{proof}

We could consider the generalization of 
$L_{x=y}$ to more than two words:
$$L_{x_1=x_2=\cdots=x_n} = 
\{ z \in \Sigma^* \ : \ |z|_{x_1} = |z|_{x_2} = \cdots = |z|_{x_n} \}.$$ 
The following examples show that deciding the finiteness of
$L_{x_1=x_2=\cdots=x_n}$ for $n \geq 3$ is more subtle than the case
$n = 2$.
Suppose
$\Sigma = \{0,1\}$.  Then
   $L_{0 = 1 = 00 = 11}$ and 
    $L_{0 = 1 = 01 = 10}$ are finite languages, but
   $L_{00 = 11 = 000 = 111}$ is not.
   
Consider $L_{0 = 1 = 00 = 11}$.  For any maximal subword consisting of 0's, the number of 0's exceeds the number of 00's, and similar for 1 and 11. So $L_{0 = 1 = 00 = 11} = \{\epsilon \}$.

Consider $L_{0 = 1 = 01 = 10}$. Since $|z|_{01} = |z|_{10}$, as shown in Figure~\ref{a01}, the words in this language must start and end with the same character.
There cannot be a 00 or the number of 0's exceeds that of 01 and 10, and similar for 11.
So, the language is a subset of $(01)^* 0 \cup (10)^* 1 \cup \{\epsilon\}$.
But no word $z$ in this language, other than $\epsilon$, has $|z|_0 = |z|_1$.
Therefore, $L_{0 = 1 = 01 = 10} = \{\epsilon\}$.

Consider $L_{00 = 11 = 000 = 111}$. It contains $(01)^*$, and hence is infinite.

Lacking a general condition for finiteness, we prove the following sufficient condition.

\begin{theorem}
If $|x_1|=\cdots=|x_n|$ then
$L_{x_1=x_2=\cdots=x_n}$ is infinite.
\end{theorem}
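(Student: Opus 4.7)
Write $k$ for the common length $|x_1| = \cdots = |x_n|$. The plan is to split into two cases depending on whether $\Sigma$ is unary, and in the non-unary case to exhibit an infinite family of witnesses using de Bruijn sequences.

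First, if $|\Sigma| = 1$, I will simply observe that there is only one length-$k$ word over $\Sigma$, so $x_1 = x_2 = \cdots = x_n$ and $L_{x_1 = \cdots = x_n} = \Sigma^*$, which is infinite. (After this trivial observation I may also assume, without loss of generality, that the $x_i$ are pairwise distinct.)

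For $|\Sigma| \geq 2$, my plan is to construct words in which every length-$k$ word over $\Sigma$ occurs the same number of times; any such word automatically lies in $L_{x_1 = \cdots = x_n}$. To do so I will invoke the existence of a de Bruijn sequence $B$ of order $k$ over $\Sigma$, viewed cyclically as a word of length $N = |\Sigma|^k$ in which every length-$k$ word appears exactly once as a cyclic subword. For each integer $m \geq 1$ I will set
$$z_m \;=\; B^m \cdot B[1..k-1],$$
so that $|z_m| = mN + k - 1$. The point is that the $mN$ length-$k$ linear windows of $z_m$ are in bijection with the cyclic length-$k$ windows of $B$ traversed $m$ times: the appended prefix $B[1..k-1]$ is exactly what is needed to convert cyclic occurrences into linear ones without boundary loss. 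Consequently every length-$k$ word over $\Sigma$ appears exactly $m$ times in $z_m$, and in particular $|z_m|_{x_i} = m$ for every $i$, so $z_m \in L_{x_1 = \cdots = x_n}$. Since $|z_m|$ grows without bound, the language is infinite.

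I do not expect any serious obstacle: the only conceptual step is recognizing that de Bruijn sequences supply exactly the required uniformity across all length-$k$ words, and the only thing to verify carefully is the boundary-handling via the suffix $B[1..k-1]$, which is a routine check.
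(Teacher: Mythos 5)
Your proof is correct and is essentially the paper's own argument: both use an order-$k$ de Bruijn word $B$, form $B^m$ followed by the length-$(k-1)$ prefix of $B$, and note that every length-$k$ word then occurs exactly $m$ times. The only difference is that you explicitly dispose of the unary-alphabet case, which the paper leaves implicit.
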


\begin{proof}
Let $\ell = |x_1|$.  Consider the cyclic 
order-$\ell$ de Bruijn word $w$ of length
$k^\ell$ over the cardinality-$k$ alphabet 
$\Sigma$.  Such a word is guaranteed to exist
for all $k \geq 2$ and $\ell \geq 1$; see, e.g., 
\cite{Ralston:1982}.  Let $w'$ be the prefix of $w$ of
length $\ell-1$.  Then $w^i w' \in L_{x_1=x_2=\cdots=x_n}$ for all $i \geq 1$.
\end{proof}

\end{document}